\crefname{theorem}{Theorem}{Theorems}
\crefname{definition}{Definition}{Definitions}
 \definecolor{BLACK}{gray}{0}
 \definecolor{WHITE}{gray}{1}
 \definecolor{RED}{rgb}{1,0,0}
 \definecolor{GREEN}{rgb}{0,1,0}
 \definecolor{BLUE}{rgb}{0,0,1}
 \definecolor{CYAN}{cmyk}{1,0,0,0}
 \definecolor{MAGENTA}{cmyk}{0,1,0,0}
 \definecolor{YELLOW}{cmyk}{0,0,1,0}
\newtheorem{theorem}{Theorem}
\newtheorem{corollary}{Corollary}
\newtheorem{proposition}{Proposition}
\newtheorem{definition}{Definition}
\newtheorem{lemma}{Lemma}
\begin{document}

\title{Enhancing Quantum Adversarial Robustness by Randomized Encodings}
\author{Weiyuan Gong}
\affiliation{Center for Quantum Information, IIIS, Tsinghua University, Beijing 100084, China}
\author{Dong Yuan}
\affiliation{Center for Quantum Information, IIIS, Tsinghua University, Beijing 100084, China}
\author{Weikang Li}
\affiliation{Center for Quantum Information, IIIS, Tsinghua University, Beijing 100084, China}
\author{Dong-Ling Deng}
\email{dldeng@tsinghua.edu.cn}
\affiliation{Center for Quantum Information, IIIS, Tsinghua University, Beijing 100084, China}
\affiliation{Shanghai Qi Zhi Institute, 41st Floor, AI Tower, No. 701 Yunjin Road, Xuhui District, Shanghai 200232, China}

\begin{abstract}
The interplay between quantum physics and machine learning gives rise to the emergent frontier of quantum machine learning, where advanced quantum learning models may outperform their classical counterparts in solving certain challenging problems. However, quantum learning systems are vulnerable to adversarial attacks: adding tiny carefully-crafted perturbations on legitimate input samples can cause misclassifications. To address this issue, we propose a general scheme to protect quantum learning systems from adversarial attacks by randomly encoding the legitimate data samples through unitary or quantum error correction encoders. In particular, we rigorously prove that both global and local random unitary encoders lead to exponentially vanishing gradients (i.e. barren plateaus) for any variational quantum circuits that aim to add adversarial perturbations, independent of the input data and the inner structures of adversarial circuits and quantum classifiers. In addition, we prove a rigorous bound on the vulnerability of quantum classifiers under local unitary adversarial attacks. We show that random black-box quantum error correction encoders can protect quantum classifiers against local adversarial noises and their robustness increases as we concatenate error correction codes. To quantify the robustness enhancement, we adapt quantum differential privacy as a measure of the prediction stability for quantum classifiers. Our results establish versatile defense strategies for quantum classifiers against adversarial perturbations, which provide 
valuable guidance to enhance the reliability and security for both near-term and future quantum learning technologies.
\end{abstract}
\maketitle

\section{Introduction}

The flourish of machine learning has led to unprecedented opportunities and achieved dramatic success in both research and commercial fields \cite{Lecun2015Deep,Jordan2015Machine}. Some notoriously challenging problems, ranging from predicting protein structures \cite{Senior2020Improved} and weather forecasting \cite{Ravuri2021Skilful} to playing the game of Go \cite{Silver2016Mastering,Silver2017Mastering}, have been cracked recently. Meanwhile, the field of quantum computation has also made tremendous progress in recent years \cite{Arute2019Quantum,Zhong2020Quantum}, giving rise to unparalleled opportunities to speedup, enhance or innovate machine learning \cite{Dunjko2018Machine,Sarma2019Machine,Amin2018Quantum,Gao2018Quantum}. 
Within this vein, ideas and concepts from the physics domain have been utilized as core ingredients for quantum machine learning algorithms \cite{Harrow2009Quantum,Lloyd2014Quantum,Lloyd2018Quantum,Hu2019Quantum,Schuld2019Quantum,Farhi2014Quantum,Peruzzo2014Variational,Mcclean2016Theory}. Notable examples in this direction include the Harrow-Hassidim-Lloyd algorithm \cite{Harrow2009Quantum}, quantum principal component analysis \cite{Lloyd2014Quantum}, quantum generative models \cite{Gao2018Quantum,Lloyd2018Quantum,Hu2019Quantum}, quantum support vector machines \cite{Schuld2019Quantum}, and variational quantum algorithms based on parametrized quantum circuits \cite{Cerezo2021Variational,Farhi2014Quantum,Peruzzo2014Variational,Mcclean2016Theory}, etc. Yet, an important issue regarding quantum learning systems concerns their reliability and security in adversarial scenarios, especially for noisy intermediate-scale quantum (NISQ) devices \cite{Preskill2018Quantum}. Here, we introduce general defense strategies by randomly encoding legitimate data samples, and analytically show their adversarial robustness in a rigorous fashion (see Fig.~\ref{fig:Illu} for illustration). 

\begin{figure}
    \centering
    \includegraphics[width=0.48\textwidth]{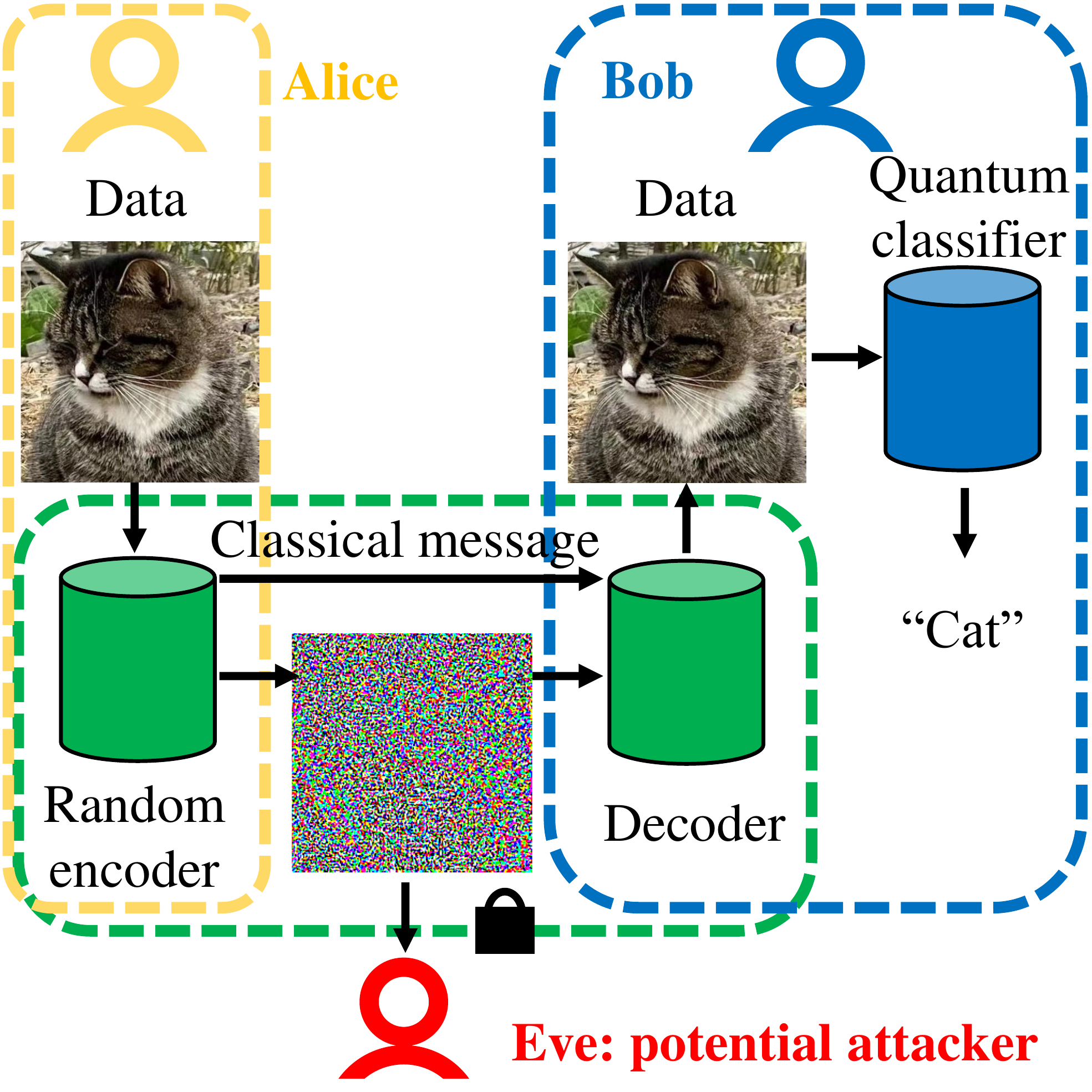}
    \caption{An illustration for exploiting randomized encoding to defend against adversarial attacks. In the quantum learning task, Alice prepares an input data sample and sends it to Bob for classification. To protect the legitimate data against the potential adversary Eve, Alice and Bob share a codebook and Alice randomly chooses an encoder in the codebook to transform the original data into encoded data, from which Eve can barely obtain any useful information. Then Alice sends to Bob the encoded quantum data and classical information about the encoder. Bob receives the messages, translates the encoded quantum data into the original figure, and performs the classification.}
    \label{fig:Illu}
\end{figure}

Adversarial machine learning is an emerging frontier that studies the vulnerability of machine learning systems and develops defense strategies against adversarial attacks \cite{Huang2011Adversarial,Chakraborty2018Adversarial}. In the classical scenario, the prediction of a deep neural network can be susceptible to tiny carefully-crafted noises, which are even imperceptible to human eyes, added to the legitimate input data \cite{Biggio2018Wild,Miller2020Adversarial,Szegedy2014Intriguing,Goodfellow2014Explaining,Kurakin2016Adversarial}. These adversarial noises can be generated by either a malicious adversary or the worst-case experimental noise from an unknown source. Recent works have demonstrated that quantum learning systems are vulnerable under adversarial settings similar to their classical counterparts \cite{Lu2020Quantum,Gong2021Universal,Liu2019Vulnerability}, sparking a new interdisciplinary research frontier of quantum adversarial machine learning \cite{Lu2020Quantum,Gong2021Universal,Ren2022Experimental,Liu2019Vulnerability,Guan2021Robustness,Liao2021Robust}.  From the theoretical aspect, even an exponentially small perturbation can cause a moderate adversarial risk for a given quantum classifier \cite{Liu2019Vulnerability}. Furthermore, it has been shown that there exist universal adversarial attacks for multiple quantum classifiers or input data samples \cite{Gong2021Universal}. 
More recently, quantum adversarial learning has been experimentally demonstrated with both large-scale real-life datasets and quantum datasets on superconducting quantum devices \cite{Ren2022Experimental}. To improve the robustness of quantum machine learning algorithms and defend against adversarial attacks, a straightforward approach is to employ a quantum-adaptive adversarial training \cite{Lu2020Quantum}. However, adversarial training in general requires generation of a large number of adversarial samples and may only perform well for the same attacking method that generates those samples. 

In classical adversarial learning, randomness is suggested to be the possible resource for developing defense strategies against adversarial perturbations \cite{Li2019Certified,Xie2018Mitigating,Guo2018Countering,Cohen2019Certified,Lecuyer2019Certified,Liu2018Towards,Pinot2019Theoretical}. However, these results are mostly empirically and there has been no unified framework for employing randomness in this context. In quantum computation, quantum error correction codes are widely used to detect and correct experimental errors. However, the errors that can be corrected are assumed to be local while adversarial perturbations are either carefully engineered or worst-case noises. In addition, the vanishing gradients (i.e. barren plateaus) for quantum circuits with randomly distributed parameters is a potential protection from most commonly used gradient-based adversarial algorithms \cite{Grigorescu2020Survey,Lu2020Quantum}.
A potential approach to achieve provable adversarial robustness for quantum classifiers is to combine randomness with quantum error correction and barren plateaus phenomenon studied in quantum computation.

In this paper, we propose an approach employing a randomized encoding procedure to protect the quantum learning systems from potential adversarial perturbations. Under practical adversarial learning scenarios, adversarial perturbations can originate from either carefully-crafted perturbations created by the attackers that have full access to the gradient information \cite{Lu2020Quantum} or the worst-case experimental noises from unknown resources \cite{Du2021Quantum}. We show the effectiveness of our scheme by using two concrete types of random encoders to mask the gradient information from the adversary and improve the robustness of quantum learning algorithms. 
The first type uses random unitary encoders and is more practical for NISQ devices, whereas the second type exploits quantum error correction encoders that are necessary for the future fault-tolerant quantum computation.

For the first type, we rigorously prove that a random global unitary encoder that satisfies $2$-design property \cite{Renes2004Symmetric} leads to exponentially small gradients for adversarial variational circuits, and thus creates barren plateaus \cite{Mcclean2018Barren,Cerezo2021Cost,Arrasmith2021Effect,Pesah2021Absence,Arrasmith2021Equivalence,Wang2021Noise,Cerezo2021Higher,Sharma2020Trainability,Holmes2021Barren,Marrero2021Entanglement,Patti2021Entanglement,Uvarov2021Barren,Grant2019Initialization,Zhao2021Analyzing,Liu2021Presence} that may hinder gradient-based algorithms in generating adversarial perturbations. We further prove that even random encoders that can be decomposed into tensor products of unitary $2$-design blocks of smaller sizes can generate barren plateaus for the adversaries as well. 
To benchmark the performance, we carry out numerical simulations concerning the classification of topological phases of the cluster-Ising model \cite{Son2011Quantum,Smacchia2011Statistical} with different loss functions and system sizes. For the second type of encoders, we consider local adversarial perturbations generated by worst-case experimental noises. We prove a lower bound for the adversarial risk in this setting based on the concentration of measure phenomenon in the high dimensional space. We analytically show that a random black-box quantum error correction (QEC) \cite{Nielsen2010Quantum,Gottesman1997Stabilizer} encoding procedure can improve the robustness of quantum learning systems for local unitary attacks. 
In particular, we show that it is sufficient to concatenate only $O(\log\log(n))$ levels of QEC encoders to bound the adversarial risk below a constant value. We adapt quantum differential privacy \cite{Hirche2022Quantum,Dwork2009Differential,Zhou2017Differential} to measure the robustness of quantum classifiers against adversarial perturbations. 
We prove an information-theoretical upper bound for the adversarial risk of quantum learning algorithms satisfying differential privacy.

The randomized encoding approach introduced in this paper is distinct from 
the previous literature that either exploit deterministic encoders for binary classification \cite{Larose2020Robust} or add white noises \cite{Du2021Quantum}. Compared to the deterministic encoder scheme which uses amplitude and phase encoding, our approach uses variational unitary circuits that are more experimental compatible for NISQ devices. Whereas adding white noise may diminish the performance of the quantum classifiers, our approach will not influence the accuracy of classification algorithms.
Furthermore, in contrast to the classical algorithms that employ randomness against adversarial attacks, our approaches provide rigorous theoretical bounds rather than empirical performance benchmarks. Our results not only establish a profound connection among quantum error correction, quantum differential privacy, barren plateau phenomenon, and quantum adversarial robustness, but also provide practical defense strategies that may prove valuable in future applications of quantum learning technologies.


The paper is organized as follows. In Sec.~\ref{sec:2}, we introduce the basic concepts and the general framework for quantum adversarial learning. In Sec.~\ref{sec:3}, we present two theorems demonstrating that both global and local randomized unitary encoders on input data samples can lead to vanishing gradients, which may hamper gradient-based algorithms from creating adversarial perturbations. We provide numerical evidence concerning classifications on the phases of the cluster-Ising model to benchmark the effectiveness of our approach. In Sec.~\ref{sec:4}, we give two theorems, one proving the vulnerability of quantum classifiers against local unitary adversarial perturbations, the other demonstrating that black-box quantum error correction encoders can effectively defend the local unitary adversarial noises on the input data samples. Finally, in Sec.~\ref{sec:5}, we discuss several open problems and conclude the paper.

\section{Basic Concepts and General Framework}\label{sec:2}
Machine learning technologies have recently achieved remarkable breakthroughs in various real-world applications \cite{Jordan2015Machine,Lecun2015Deep} including natural language processing \cite{Hinton2012Deep}, automated driving \cite{Grigorescu2020Survey}, and medical diagnostics \cite{Kononenko2001Machine}. Meanwhile, serious concerns have also been raised about the integrity and security of such technologies in various adversarial scenarios \cite{Huang2011Adversarial,Biggio2018Wild,Miller2020Adversarial}. 
For instance, the medical recognition software from a medical diagnostics or a sign recognition system from a self-driving car may cause catastrophic medical or traffic accidents if they are not robust against some occasional modifications (which may even be  imperceptible to human eyes) in identifying medical scans or traffic images \cite{Finlayson2019Adversarial}. 
To address these vital problems and concerns, the field of adversarial machine learning has been developed to construct and defend the potential adversarial manipulations against machine learning systems under different scenarios \cite{Vorobeychik2018Adversarial}. The field has attracted considerable attention and there are rapid developments for both the attack and defense strategies in different adversarial settings. For simplicity and concreteness, we will only focus our discussion on the setting of supervised learning, although 
generalizations to unsupervised or reinforcement learning settings are possible and worth systematic future investigations. 

On the one hand, there have been a number of algorithms proposed to transfer the adversarial attack problem into an optimization one and solve the corresponding problem or its variants through optimization strategies \cite{Szegedy2014Intriguing,Goodfellow2014Explaining,Biggio2018Wild,Vorobeychik2018Adversarial,Kurakin2016Adversarial,Papernot2017Practical,Madry2017Towards,Papernot2016Transferability,Papernot2016Limitations,Chen2017Zoo}. We divide the adversarial attacks into black-box and white-box attacks according to the amount of information known by the adversary about the target classifier. In the white-box setting, the attacker has  full information about the inner structure and algorithm of the classifier. Whereas, in the black-box setting the attacker possesses only partial or even no information about the classifier.   A crucial piece of information under adversarial settings is the gradient information about the classifier. The gradients can be calculated based on the inner structure, algorithm and the loss function of the classifier. In the white-box setting, various algorithms such as the fast
gradient sign method (FGSM )\cite{,Madry2017Towards}, basic iterative method (BIM) \cite{Kurakin2016Adversarial}, projected gradient descent (PGD) \cite{Madry2017Towards}, and momentum iterative method (MIM) \cite{Dong2018Boosting} have been developed based on the gradient information. In the black-box setting, algorithms that exploit   the transferability property of neural-network classifiers have been developed, including the transfer attack \cite{Goodfellow2014Explaining}, substitute model attack \cite{Papernot2017Practical,Papernot2016Transferability},
and zeroth-order optimization (ZOO) attack \cite{Chen2017Zoo} methods. On the other hand, a number of defense strategies against adversarial attacks have been developed as well. Some notable examples includes adversarial training \cite{Kurakin2017Adversarial}, defense generative adversarial network \cite{Goodfellow2014Generative,Samangouei2018Defense}, and knowledge distillation \cite{Papernot2016Distillation,Hinton2015Distilling}. These algorithms have achieved satisfying robustness performance against particular types of adversarial attacks. In general, we \textit{cannot} expect a defense strategy that can promote the robustness of all machine learning algorithms against any adversarial attacks as long as the adversary knows the information about the classifier. An alternative protocol to protect the classifier is to hide the information from the attackers. Some algorithms along this direction include adding random noise or transformations which smooths the gradients and the landscape of the loss function \cite{Li2019Certified,Cohen2019Certified,Liu2018Towards,Lecuyer2019Certified,Guo2018Countering,Xie2018Mitigating}. As a trade-off, these approaches in general would increase the difficulty in training the classifier.

Quantum classifiers are analog of classical classifiers, which aim to solve classification problems with  quantum devices \cite{Li2022Recent}.
In this paper, we propose a defense strategy for quantum classifiers against adversarial attacks through randomized encoders. We start with a brief introduction to the basic concepts, notations, and ideas of quantum classifiers and quantum adversarial learning. In general, a quantum classification task in the supervised learning setting aims to assign a label $s\in S$ to an input quantum data sample $\rho\in\mathcal{H}$, with $S$ a countable label set and $\mathcal{H}$ being a subspace of the entire Hilbert space. For technical simplicity, we suppose that the input quantum states are pure states. The supervised learning procedure aims to learn a function (called a hypothesis function) $h:\mathcal{H}\to S$ that outputs a label $s\in S$ for each input state $\rho\in \mathcal{H}$. To achieve this goal, we parametrize the hypothesis function with $\bm{\theta}\in\Xi$, where $\Xi$ is the parameter space. We train the classifier with a set of training data $\mathcal{T}_N=\{(\ket{\psi}^{(1)},s^{(1)}),...,(\ket{\psi}^{(N)},s^{(N)})\}$, where $\ket{\psi}^{(i)}$ and $s^{(i)}$ $(i=1,...,N)$ are the input states and the corresponding labels. This procedure is usually achieved by minimizing a chosen loss function $\min_{\bm{\theta}\in\Xi}L_N(\bm{\theta})$ over parameter space $\Xi$, with $L_N(\bm{\theta})=\frac{1}{N}\sum_{i=1}^NL(h(\ket{\psi}^{(i)};\bm{\theta}),s^{(i)})$ denoting the loss function averaged over the training set. A number of different quantum classifiers with different structures, loss functions, and optimization methods have been proposed \cite{Schuld2020Circuit,Farhi2018Classification,Schuld2017Implementing,Mitarai2018Quantum,Schuld2019Quantum,Havlivcek2019Supervised,Zhu2019Training,Cong2019Quantum,Wan2017Quantum,Grant2018Hierarchical,Du2021Grover,Uvarov2020Machine,Rebentrost2014Quantum,Blank2020Quantum,Tacchino2019Artificial}. Each approach bears its pros and cons, and the choice of the classifiers depends on the specific problem. A straightforward approach to construct a quantum classifier, known as variational quantum classifiers\cite{Schuld2020Circuit,Farhi2018Classification,Mitarai2018Quantum}, is to exploit variational quantum circuits \cite{Farhi2014Quantum,Peruzzo2014Variational,Mcclean2016Theory} to optimize the loss function analogously to quantum support vector machines \cite{Rebentrost2014Quantum}. There exist a number of different variants on the structures of the variational quantum circuits, including hierarchical quantum classifiers\cite{Grant2018Hierarchical} and quantum convolutional neural networks \cite{Cong2019Quantum}. 

Recent researches have shown that quantum classifiers also suffer from the vulnerability problem under adversarial attacks \cite{Lu2020Quantum,Liu2019Vulnerability,Gong2021Universal,Liao2021Robust}, with an experimental demonstration marked as the latest progress \cite{Ren2022Experimental}. 
Unlike the training procedure, finding an adversarial example for quantum classifiers can be regarded as a different optimization program on the input data space. Specifically, our goal is to discover the unitary perturbation $U_\delta$ within a restricted region $\Delta$ close to identity, which after being added to the legitimate input states, will maximize the loss function:
\begin{align}\label{eq:AdvOpt}
\max_{\delta\in\Delta} L(h(U_\delta\ket{\psi}^{(i)};\xi),s^{(i)}). 
\end{align}
In the white-box setting, the inner structures of quantum classifiers and the loss function are known to the attackers. Hence, the attackers can solve the optimization problem in Eq.~\eqref{eq:AdvOpt} exploiting the gradient information of the loss function. There have been several algorithms to attack quantum classifiers, such as quantum-adaptive BIM, FGSM, MIM algorithms \cite{Lu2020Quantum}, etc. 

\begin{figure*}
    \centering
    \includegraphics[width=0.93\textwidth]{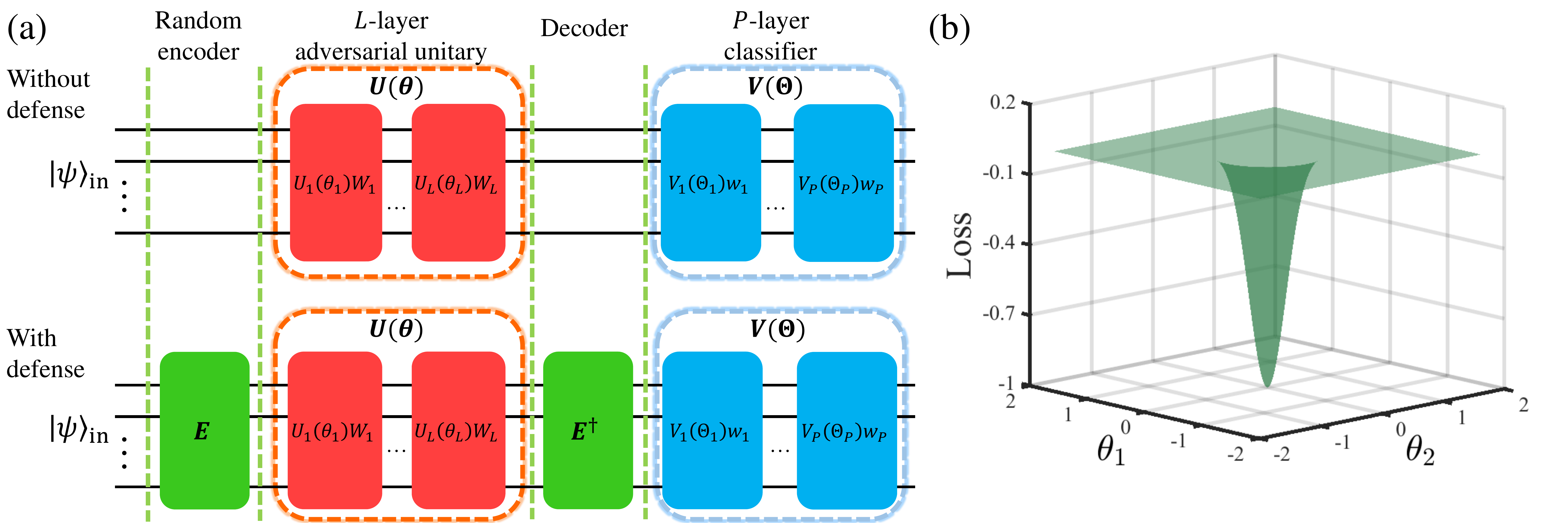}
    \caption{(a) An illustration of exploiting a random unitary encoder to defend against adversarial attack from a parametrized variational quantum circuits. In the scenario without adversarial attacks, an input state $\ket{\psi}_{\text{in}}$ is input to the parametrized variational classifier $V(\Theta)$ directly. While in the adversarial scenario, a parametrized adversarial variational circuit $U(\bm{\theta})$ is used to add an evasion attack \cite{Vorobeychik2018Adversarial}, as sketched in the upper panel. In the lower panel, a random unitary encoder $E$ and the corresponding decoder $E^\dagger$ are added before and after $U(\bm{\theta})$ to protect the data sample against potential adversary Eve. (b) By using a random global unitary encoder, the landscape for any adversarial circuit exhibits a barren plateau (i.e. vanishing gradients) regardless of the inner structure of the circuit. The variables $\theta_1$ and $\theta_2$ are variational parameters of $U(\bm{\theta})$.}
    \label{fig:BPModel}
\end{figure*}

The defense strategy under these adversarial settings remains largely unexplored, with most attention concentrated on proving the robustness of a given classifier \cite{Wiebe2018Hardening,Larose2020Robust,Weber2021Optimal}. Some notable algorithms to boost the robustness of a quantum classifier, such as adversarial training \cite{Lu2020Quantum} and adding random noise \cite{Du2021Quantum}, still suffer from white-box adversarial attacks or
the loss of useful information.

Here, we propose a generally applicable scheme to protect the quantum machine learning systems using randomized encoders in adversarial settings. Our essential idea is illustrated in Fig.~\ref{fig:Illu}.  We transfer the classification task into a three-party protocol, in which Alice prepares a legitimate quantum input data sample, Bob receives the data sample and performs the classification, and Eve is the potential adversary performing adversarial manipulations. We assume that Alice and Bob share a codebook $C=\{p_i, E_i\}$ consisting of different encoders $E_i$ with the corresponding decoders, and the probability distribution $\{p_i\}$ of choosing $E_i$. The agreement on the codebook can be realized by quantum key distribution \cite{Scarani2009Security} or quantum teleportation \cite{Nielsen2010Quantum}. We show that by randomly choosing an encoder from the codebook, the encoded quantum data can be robust against adversarial noises. Roughly speaking, the random transformation induced by the encoder masks the information that can be obtained by the adversary, thus mitigating the adversarial risk. Specifically, we consider two types of codebooks shared by Alice and Bob in the following two sections concerning the variational quantum machine learning on NISQ devices and the fault-tolerant quantum machine learning in the future. We provide analytical bounds for the robustness of protected quantum machine learning systems under adversarial settings.


\section{Defense adversarial attacks with barren plateaus}\label{sec:3}
We first consider the case of adding a random unitary transformation as an encoder. We note that 
any adversary attack can be effectively implemented as adding a $L$-layer parametrized variational quantum circuit (PVQC) $U(\bm{\theta})$, as shown in  Fig.~\ref{fig:BPModel}(a). More concretely, we can write the adversarial PVQC as 
\begin{align}\label{eq:AdvCirc}
U(\bm{\theta})=U(\theta_1,...,\theta_L)=\prod_{l=1}^LU_l(\theta_l)W_l,
\end{align}
where $U_l(\theta_l)=\exp(-i\theta_lA_l)$ is the parametrized variational component in each layer, $A_l$ is a Hermitian operator, and $W_l$ is a unitary operator that represents the fixed component in each layer. We assume the classifier $V(\Theta)$ is well-trained with parameters $\Theta$. It can be a general unitary operator such as a $P$-layer PVQC shown in the figure. To perform a prediction, we simply measure some particular qubits at the output after the classifier and assign labels according to the measurement outcomes. Given an input pure state $\ket{\psi}_{\text{in}}$, the loss function can be regarded as an expectation value over a Hermitian operator $H$. For the legitimate input and the adversarial input, the loss functions can be written as $L(\Theta)=\bra{\psi}_{\text{in}}V^\dagger(\Theta) HV(\Theta)\ket{\psi}_{\text{in}}$ and $L(\Theta;\bm{\theta})=\bra{\psi}_{\text{in}}U^\dagger(\bm{\theta})V^\dagger(\Theta) HV(\Theta)U(\bm{\theta})\ket{\psi}_{\text{in}}$, respectively.

To protect the quantum classifier $V(\Theta)$ from the adversarial PVQC $U(\bm{\theta})$, we exploit a random encoder $E$ and the corresponding decoder $E^\dagger$ to encrypt the legitimate data sample $\ket{\psi}_{\text{in}}$. We note that the codebook $C=\{E_i\}$ contains a particular set of encoders with probability distribution $p_i$. We assume that $C$ is unitary $2$-design \cite{Renes2004Symmetric}, namely that the first and the second moments are equivalent to the corresponding moments with respect to the Haar measure $d\mu_H(E)$:
\begin{align}\label{eq:TDesign}
\sum_ip_iE_i^{\otimes t}ME_i^{\dagger\otimes t}=\int d\mu_H(E)E^{\otimes t}ME^{\dagger\otimes t},t=1,2,
\end{align}
where $M$ is an arbitrary operator. As shown in Refs. \cite{Harrow2009Random,Dankert2009Exact,Brandao2016Local,Harrow2018Approximate}, quantum circuits can implement unitary $2$-design efficiently---a circuit with only $O(n^2)$ [$O(n)$] gates is sufficient for attaining exact (approximate) unitary $2$-design.  The type of the gates can be further restricted to single-qubit rotations and nearest neighbor entangling gates. Therefore, such a random encoder $E_i\in C$ can be efficiently realized by a PVQC with $O(n^2)$ gates. In this case, the loss function given a fixed encoder $E_i$ can be represented by
\begin{align}\label{eq:LossFunc}
L(\Theta,E_i;\bm{\theta})=\bra{\psi}_{\text{in}}E_i^\dagger U^\dagger E_iV^\dagger HVE_i^\dagger UE_i\ket{\psi}_{\text{in}},
\end{align}
where $U\equiv U(\bm{\theta})$ and $V\equiv V(\Theta)$ are parametrized with $\bm{\theta}$ and $\Theta$, respectively. In the adversarial setting, we assume that the adversarial PVQC is initialized with $\bm{\theta}_0$ such that $U(\bm{\theta}_0)=I$, i.e., the adversary starts from a legitimate quantum sample and explores the gradient direction to maximize the value of the loss function. We denote $\partial_{\theta_l}L(\Theta,E_i;\bm{\theta})$ to be the gradient of $L(\Theta,E_i;\bm{\theta})$ with respect to each parameter $\theta_l,l=1,...,L$ in the adversarial PVQC. Now, we are ready to present our first theorem regarding the expectation and variance on each $\partial_{\theta_l}L(\Theta,E_i;\bm{\theta})$.

\begin{theorem}\label{thm:1}
Suppose we exploit a randomly chosen global unitary encoder $E_i$ from a unitary $2$-design codebook $C=\{p_i,E_i\}$ . The expectation and variance of the derivatives of the loss function defined in Eq.~\eqref{eq:LossFunc} with respect to any component $\theta_l\in\bm{\theta}$ satisfy the following (in)equalities:
\begin{align}\label{eq:thm1-1}
&\mathbb{E}_{E_i\in C}[\partial_{\theta_l}L(\Theta,E_i;\bm{\theta}_0)]=0,\\ \label{eq:thm1-2}
&\text{Var}_{E_i\in C}[\partial_{\theta_l}L(\Theta,E_i;\bm{\theta}_0)]\leq\frac{2\Tr(A_l^2)}{d^2-1}\Tr(\rho H_V^2),
\end{align}
where $\bm{\theta}_0$ are the initial parameters for the adversarial PVQC with $U(\bm{\theta}_0)=I$, $\rho=\ket{\psi}_{\text{in}}\bra{\psi}_{\text{in}}$ is the density matrix of the input state, $A_l$ is the Hermitian operator of the parametrized variational component in the $l$-th layer, $H_V=V^\dagger HV$, and $d=2^n$ is the dimension of the Hilbert space.
\end{theorem}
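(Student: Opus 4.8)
Here is how I would go about proving Theorem~\ref{thm:1}.

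The plan is to recast \eqref{eq:LossFunc} as an ordinary variational-circuit loss and then let the randomness of the encoder $E_i$ play the role a random circuit plays in standard barren-plateau arguments. Since $E_i^\dagger U^\dagger E_i=(E_i^\dagger U E_i)^\dagger$, I would first write $L(\Theta,E_i;\bm{\theta})=\bra{\psi}_{\text{in}}W^\dagger H_V W\ket{\psi}_{\text{in}}$ with $W\equiv E_i^\dagger U(\bm{\theta})E_i$ and $H_V=V^\dagger H V$. Splitting the adversarial circuit at its $l$-th variational gate, $U(\bm{\theta})=U_{\mathrm L}\,U_l(\theta_l)\,U_{\mathrm R}$ with $U_{\mathrm L},U_{\mathrm R}$ independent of $\theta_l$, one has $\partial_{\theta_l}U=-i\,U_{\mathrm L}A_l U_l(\theta_l)U_{\mathrm R}$; at $\bm{\theta}_0$ the constraint $U(\bm{\theta}_0)=I$ forces $U_{\mathrm L}=U_{\mathrm R}^\dagger U_l(\theta_l)^\dagger$, so using $[A_l,U_l(\theta_l)]=0$ this collapses to $\partial_{\theta_l}U|_{\bm{\theta}_0}=-i\widetilde A_l$, where $\widetilde A_l:=U_{\mathrm R}^\dagger A_l U_{\mathrm R}$ is Hermitian, $E_i$-independent, and (by cyclicity) obeys $\Tr(\widetilde A_l)=\Tr(A_l)$ and $\Tr(\widetilde A_l^2)=\Tr(A_l^2)$. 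Since $W(\bm{\theta}_0)=I$, the product rule gives
\begin{equation}
\partial_{\theta_l}L(\Theta,E_i;\bm{\theta}_0)=i\,\Tr(E_i^\dagger\widetilde A_l E_i\,[H_V,\rho]),
\end{equation}
with $\rho=\ket{\psi}_{\text{in}}\bra{\psi}_{\text{in}}$. The decisive point is that the identity initialization collapses $W$ and $W^\dagger$ to $I$, so the gradient retains exactly \emph{one} pair $E_i,E_i^\dagger$; then $(\partial_{\theta_l}L)^2$ contains two copies of each encoder, which is exactly what the $2$-design property \eqref{eq:TDesign} controls (a generic $\bm{\theta}$ would instead call for a $4$-design).

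For \eqref{eq:thm1-1}, put $G:=[H_V,\rho]$, which is traceless (trace of a commutator) and anti-Hermitian. The $t=1$ case of \eqref{eq:TDesign}, $\mathbb{E}_{E_i\in C}[E_i X E_i^\dagger]=(\Tr(X)/d)\,I$, gives $\mathbb{E}_{E_i}[E_i G E_i^\dagger]=0$, hence $\mathbb{E}_{E_i}[\partial_{\theta_l}L(\Theta,E_i;\bm{\theta}_0)]=i\,\Tr(\widetilde A_l\,\mathbb{E}_{E_i}[E_i G E_i^\dagger])=0$. For \eqref{eq:thm1-2}, the vanishing mean gives $\text{Var}_{E_i}[\partial_{\theta_l}L]=\mathbb{E}_{E_i}[(\partial_{\theta_l}L)^2]$, and I would write the square as a trace on $\mathcal{H}^{\otimes2}$,
\begin{equation}
(\partial_{\theta_l}L)^2=-\Tr[\widetilde A_l^{\otimes2}(E_i^{\otimes2})G^{\otimes2}(E_i^{\dagger\otimes2})],
\end{equation}
then use the $t=2$ case of \eqref{eq:TDesign} to replace $\mathbb{E}_{E_i}[E_i^{\otimes2}M E_i^{\dagger\otimes2}]$ by the Haar twirl $c_I\,I+c_S\,\mathbb{S}$, with $\mathbb{S}$ the swap on $\mathcal{H}^{\otimes2}$ and $c_I=[d\,\Tr(M)-\Tr(M\mathbb{S})]/[d(d^2-1)]$, $c_S=[d\,\Tr(M\mathbb{S})-\Tr(M)]/[d(d^2-1)]$. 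For $M=G^{\otimes2}$ one has $\Tr(M)=(\Tr G)^2=0$ and $\Tr(M\mathbb{S})=\Tr(G^2)$, and with $\Tr(\widetilde A_l^{\otimes2})=(\Tr A_l)^2$, $\Tr(\widetilde A_l^{\otimes2}\mathbb{S})=\Tr(A_l^2)$ this gives $\mathbb{E}_{E_i}[(\partial_{\theta_l}L)^2]=\Tr(G^2)(\Tr A_l)^2/[d(d^2-1)]-\Tr(G^2)\Tr(A_l^2)/(d^2-1)$.

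To finish, $\Tr(G^2)=-\Tr(G^\dagger G)=-\|[H_V,\rho]\|_{\mathrm F}^2\le 0$, so the first term above is non-positive ($\Tr(G^2)\le 0$ times $(\Tr A_l)^2\ge 0$) and can be dropped, leaving $\text{Var}_{E_i}[\partial_{\theta_l}L]\le\|[H_V,\rho]\|_{\mathrm F}^2\,\Tr(A_l^2)/(d^2-1)$; a short expansion using $\rho^2=\rho$ then gives $\|[H_V,\rho]\|_{\mathrm F}^2=2\Tr(\rho H_V^2)-2(\bra{\psi}_{\text{in}}H_V\ket{\psi}_{\text{in}})^2\le 2\Tr(\rho H_V^2)$, which is \eqref{eq:thm1-2}. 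I expect the real work to lie not in any single deep step but in the second-moment bookkeeping---carrying the Weingarten coefficients correctly and, above all, verifying that the discarded term genuinely has the right sign, which hinges on $G=[H_V,\rho]$ being anti-Hermitian---while the conceptual crux upstream is the reduction showing that initializing at the identity leaves a single encoder pair, so that a unitary $2$-design alone forces vanishing gradients irrespective of $U(\bm{\theta})$, $V(\Theta)$, $H$, and $\ket{\psi}_{\text{in}}$.
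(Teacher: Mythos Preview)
Your proposal is correct and follows essentially the same approach as the paper: both compute the gradient at $\bm{\theta}_0$, use the first Haar moment to obtain \eqref{eq:thm1-1}, and the second moment to obtain the exact variance $\frac{2}{d(d^2-1)}[d\Tr(A_l^2)-\Tr^2(A_l)][\Tr(\rho H_V^2)-\Tr^2(\rho H_V)]$ before dropping non-positive terms to reach \eqref{eq:thm1-2}. The only differences are presentational---you package the computation via $G=[H_V,\rho]$ and the tensor-product/swap form of the Weingarten twirl, whereas the paper expands the commutator into four terms and applies the equivalent single-copy formula $\int U^\dagger AUXU^\dagger BU\,d\mu_H(U)$---and your anti-Hermiticity observation for the sign is a slightly cleaner bookkeeping device than the paper's explicit factoring, but the underlying argument is identical.
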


\begin{proof}
We give a brief sketch of the essential idea here. The full proof is technically involved and thus left to Appendix.~\ref{appsec:1}. As we assume that the random encoder $C=\{p_i,E_i\}$ satisfies the unitary $2$-design properties and $U(\bm{\theta}_0)=I$, we can obtain the expectation and variance of th gradients by calculating the first and second moments using Haar integral. To derive the analytical results, the Haar integrals are calculated by Schur-Weyl duality\cite{Zhang2014Matrix}. We first prove that the $\mathbb{E}_{E_i\in C}[\partial_{\theta_l}L(\Theta,E_i;\bm{\theta}_0)]$ is an integral over the first moment and thus vanishes. We then calculate the variance of the gradient using $\text{Var}_{E_i\in C}[\partial_{\theta_l}L(\Theta,E_i;\bm{\theta}_0)]=\mathbb{E}_{E_i\in C}[\partial_{\theta_l}L(\Theta,E_i;\bm{\theta}_0)^2]-\mathbb{E}_{E_i\in C}[\partial_{\theta_l}L(\Theta,E_i;\bm{\theta}_0)]^2=\mathbb{E}_{E_i\in C}[\partial_{\theta_l}L(\Theta,E_i;\bm{\theta}_0)^2]$. The variance can thus be obtained by a second moment integral, which results in an exponentially small value and yields Ineq.~\eqref{eq:thm1-2}.
\end{proof}

This theorem guarantees that by choosing a random unitary encoder from the codebook $C$, we can bound the variance of gradients for any parameters in any potential adversarial PVQC circuits with an exponentially small value. By using Chebyshev's inequality, this theorem indicates that the probability of finding a gradient along any direction of amplitude larger than a fixed constant $\tau>0$ is exponentially small. It has been proved in Ref. \cite{Liu2019Vulnerability} that the vulnerability of a quantum classifier also grows exponentially with the system size $n$ and perturbations of only $O(\sqrt{1/d}),d=2^n$ can render a considerable adversarial risk. However, this result does not crack the security guarantee in our result because we prove that the gradient vanishes at a more rapid speed $O(1/d)$. The exponentially small gradients for the adversarial PVQC lead to a barren plateau which requires exponentially large precision and iteration steps for the adversary that exploits gradient-based algorithm to construct an adversarial example. Therefore, this algorithm protecting the quantum machine learning systems by masking the gradient information from the attackers. We emphasize that our protection encoder can be efficiently realized using a circuit containing only $O(n^2)$ gates to satisfy the unitary $2$-design requirement, which is roughly the same scaling as most quantum classifiers commonly used in practice.

We also stress that the adversarial PVQC is restricted to a small neighborhood of the identity operator, thus itself do not satisfy unitary 2-design. The barren plateaus faced by the adversary are induced by the random encoding process with the codebook. This is in sharp contrast to the barren plateaus for variational quantum circuits studied in the previous literature \cite{Mcclean2018Barren,Cerezo2021Cost}, where the variational circuits themselves  are required to be unitary 2-design.

\cref{thm:1} can be further extended to other codebooks. For example, we consider another model, where the encoder $E_i$ can be written as a tensor product of $m$-qubit blocks $(m<n)$ with each block satisfying unitary $2$-design. We show that using these encoders, one can create barren plateaus for the adversary PVQC with a lower request on the number of gates.  Without loss of generality, we assume that $n=m\xi$ and $E_i=\bigotimes_{j=1}^\xi E_i^j$ such that ensemble $\{p_i^j,E_i^j\}$ forms a unitary $2$-design for all $j$. We can similarly decompose the operator $A_l$ in each layer of the adversarial PVQC as:
\begin{align}\label{eq:DecompA}
A_l=\sum_k c_k \bigotimes_{j=1}^\xi A_{l,k}^j.
\end{align}

\begin{figure*}
    \centering
    \includegraphics[width=0.98\textwidth]{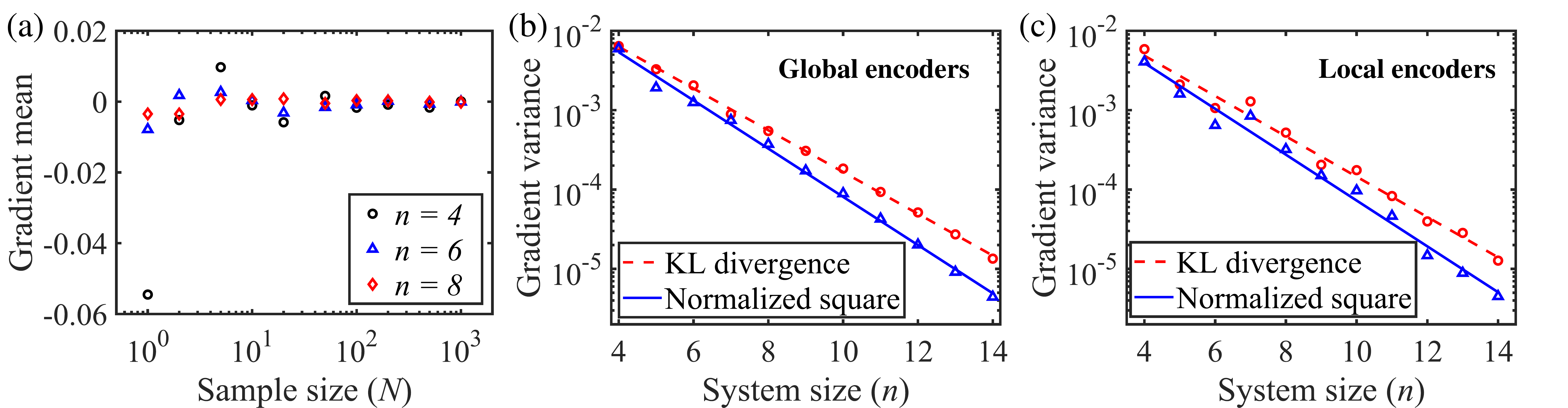}
    \caption{Numerical results for the gradients of adversarial variational circuits. (a) The mean values of $\partial_{\theta_l}L(\Theta,E_i;\bm{\theta}_0)$ as functions of the sample size $N$ for different system sizes $n$. The loss function is taken as the KL divergence. The mean values of gradients are averaged over all the parameters in the adversarial parametrized variational quantum circuits. (b) The average variances of $\partial_{\theta_l}L(\Theta,E_i;\bm{\theta}_0)$ for the KL divergence and normalized square loss as functions of the system size $n$.  The encoders used here are global random parametrized variational quantum circuits. (c) Similar to (b), whereas the encoders used here are tensor products of two-qubit random parametrized variational blocks.}
    \label{fig:Numeric}
\end{figure*}

We assume that $\sum_{k,k'}c_kc_{k'}$ is bounded, $\Tr(A_{l,k}^{j2})\leq 2^m,\forall l,i$, and $A_{l,k}^j$ is traceless. We remark that this assumption is reasonable, in the sense that it is satisfied by most commonly used quantum variational circuits. We have following theorem:

\begin{theorem}\label{thm:2}
Assume we exploit a randomly chosen encoder $E_i$,  which can be written as the tensor product of $\xi$ $m$-qubit blocks independently chosen from unitary $2$-design codebook $C=\{p_i,E_i^j\}$ $(j=1,...,\xi)$. We assume the operators in adversarial PVQC can be decomposed as Eq.~\eqref{eq:DecompA} and $A_{l,k}^j$ is traceless with $\Tr(A_{l,k}^{j2})\leq 2^m,\forall l,i,k$. The expectation and variance of the derivatives of the loss function defined in Eq.~\eqref{eq:LossFunc} with respect to any component $\theta_l\in\bm{\theta}$ satisfies the following (in)equalities:
\begin{align}\label{eq:thm2-1}
&\mathbb{E}_{E_i^j\in C}[\partial_{\theta_l}L(\Theta,E_i;\bm{\theta}_0)]=0,\\\label{eq:thm2-2}
&\text{Var}_{E_i^j\in C}[\partial_{\theta_l}L(\Theta,E_i;\bm{\theta}_0)]\leq O\left(\left(\frac{2^m+1}{2^{2m}-1}\right)^\xi\right).
\end{align} 
\end{theorem}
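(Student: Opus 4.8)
The plan is to mirror the two-moment strategy of \cref{thm:1}, but to exploit the product structure $E_i=\bigotimes_{j=1}^{\xi}E_i^j$ so that every Haar integral over the encoder factorizes into $\xi$ independent $2^m$-dimensional block integrals. First I would record the gradient at the initialization. Writing $\rho=\ket{\psi}_{\text{in}}\bra{\psi}_{\text{in}}$, $H_V=V^\dagger H V$, and $M=[H_V,\rho]$, and using $U(\bm{\theta}_0)=I$, differentiating Eq.~\eqref{eq:LossFunc} yields
\begin{align}
\partial_{\theta_l}L(\Theta,E_i;\bm{\theta}_0)=i\,\Tr\!\big[E_i^\dagger A_l E_i\, M\big],
\end{align}
where, strictly, $A_l$ here stands for $A_l$ conjugated by the fixed layers of the adversarial circuit lying to its left; under the standing hypothesis that those fixed layers (and $\bm{\theta}_0$) respect the block partition, this conjugate still admits the decomposition Eq.~\eqref{eq:DecompA}, so I keep writing it as $A_l=\sum_k c_k\bigotimes_j A_{l,k}^j$. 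Substituting and using linearity together with independence of the blocks, $\mathbb{E}_{E_i^j\in C}\big[E_i^\dagger A_l E_i\big]=\sum_k c_k\bigotimes_j \mathbb{E}_{E_i^j}\!\big[(E_i^j)^\dagger A_{l,k}^j E_i^j\big]$; the unitary $1$-design property of each block ensemble sends the $j$-th factor to $2^{-m}\Tr(A_{l,k}^j)\,I=0$ since $A_{l,k}^j$ is traceless. Every summand vanishes, giving Eq.~\eqref{eq:thm2-1}.

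For the variance, since the mean is zero, $\text{Var}_{E_i^j\in C}[\partial_{\theta_l}L]=\mathbb{E}[(\partial_{\theta_l}L)^2]$, and on the doubled Hilbert space $(\partial_{\theta_l}L)^2=-\Tr\big[E_i^{\dagger\otimes2}(A_l\otimes A_l)E_i^{\otimes2}\,(M\otimes M)\big]$. Permuting the two copies of each block so that $E_i^{\otimes2}=\bigotimes_j (E_i^j)^{\otimes2}$ (with the matching permutation applied to $A_l\otimes A_l$ and $M\otimes M$) and taking the expectation, the averaged operator becomes $\sum_{k,k'}c_kc_{k'}\bigotimes_j \mathbb{E}_{E_i^j}\!\big[(E_i^j)^{\dagger\otimes2}(A_{l,k}^j\otimes A_{l,k'}^j)(E_i^j)^{\otimes2}\big]$. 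The unitary $2$-design property lets me evaluate each block factor by the standard two-copy twirl formula (Schur--Weyl duality); because $A_{l,k}^j,A_{l,k'}^j$ are traceless only the swap term survives, and the factor equals $\frac{\Tr(A_{l,k}^j A_{l,k'}^j)}{2^{2m}-1}\big(S_j-2^{-m}I\big)$ with $S_j$ the swap of the two copies of block $j$. Bounding $|\Tr(A_{l,k}^j A_{l,k'}^j)|\le\sqrt{\Tr((A_{l,k}^j)^2)\Tr((A_{l,k'}^j)^2)}\le 2^m$ by Cauchy--Schwarz and the hypothesis $\Tr((A_{l,k}^j)^2)\le 2^m$, and using $\|S_j-2^{-m}I\|_\infty\le 1+2^{-m}$, each factor has operator norm at most $\frac{2^m+1}{2^{2m}-1}$. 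Multiplying the $\xi$ factors and summing the finite index set (with $\sum_{k,k'}|c_kc_{k'}|=O(1)$ by assumption) bounds the operator norm of the averaged doubled operator by $O\!\big((\tfrac{2^m+1}{2^{2m}-1})^\xi\big)$. Then Hölder's inequality gives $\text{Var}\le \|\mathbb{E}[\cdots]\|_\infty\,\|M\otimes M\|_1=\|\mathbb{E}[\cdots]\|_\infty\,\|[H_V,\rho]\|_1^2$, and, since $\rho$ is pure, $[H_V,\rho]$ has rank at most two with $\|[H_V,\rho]\|_1\le 2\sqrt{\Tr(\rho H_V^2)}=O(1)$; this yields Eq.~\eqref{eq:thm2-2}.

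I expect the main obstacle to be bookkeeping rather than conceptual: tracking the tensor-factor permutation that converts $E_i^{\otimes2}$ into $\bigotimes_j(E_i^j)^{\otimes2}$ consistently across $A_l\otimes A_l$ and $M\otimes M$, and applying the per-block twirl with the correct normalization so that the product over blocks produces exactly the factor $\big(\tfrac{2^m+1}{2^{2m}-1}\big)^\xi$ instead of a cruder exponential. A subtler point that genuinely uses the hypotheses on the adversarial PVQC is verifying that conjugating $A_l$ by the circuit's fixed layers preserves its block-product form Eq.~\eqref{eq:DecompA}; if one does not wish to assume this, the bound should instead be stated for the conjugated generator, whose relevant traces are unchanged.
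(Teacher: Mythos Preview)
Your proposal is correct and reaches the same bound, but via a genuinely different route than the paper. The paper works entirely in the single-copy Hilbert space: it expands $(\partial_{\theta_l}L)^2$ into the four trace terms coming from the commutator squared, then evaluates each by iterating the block second-moment formula $\int U_j^\dagger A_j U_j\, X\, U_j^\dagger B_j U_j\,d\mu(U_j)$ over $j=1,\dots,\xi$. This produces a sum over all $2^\xi$ subsets $J\subseteq\{1,\dots,\xi\}$, and the factor $\big(\tfrac{2^m+1}{2^{2m}-1}\big)^\xi$ only appears at the end via the binomial resummation $\sum_{J}\prod_{j\in J}2^m=(2^m+1)^\xi$. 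Your doubled-space approach is more streamlined: the two-copy twirl on each block gives directly the tensor factor $\tfrac{\Tr(A_{l,k}^jA_{l,k'}^j)}{2^{2m}-1}(S_j-2^{-m}I)$, whose operator norm is already $\le\tfrac{2^m+1}{2^{2m}-1}$, so the product over blocks yields the desired exponent with no subset expansion. Pairing this with H\"older against $\|[H_V,\rho]\|_1^2\le 4\Tr(\rho H_V^2)$ also makes the residual constant explicit, whereas the paper carries an unspecified $C_0$. The paper's route is slightly more elementary in that it avoids tensor doubling, at the cost of heavier combinatorial bookkeeping; your route trades that for the one-time factor permutation you flag. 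Your caveat that the conjugated generator $\tilde A_l=U_-^\dagger A_lU_-$ must retain the block decomposition of Eq.~\eqref{eq:DecompA} is well taken and is equally implicit in the paper's computation.
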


\begin{proof}
We sketch the main idea for the proof here and leave the technical details in Appendix.~\ref{appsec:1}. As we assume that each block $E_i^j$ in the encoder satisfies unitary $2$-design independently, we can obtain the expectation and variance of the loss function by calculating the Haar integral separately on each $E_i^j$. According to the decomposition in Eq.~\eqref{eq:DecompA}, we regard $A_l$ as a summation of terms that are tensor products of operators on each block and calculate these terms separately. In the first step, we derive the zero expectation on the gradients in Eq.~\eqref{eq:thm2-1} by calculating the first moment similar to \cref{thm:1}. Next, we compute the variance of the gradients by calculating the second moment Haar integral for each $E_i^j$. The result for the integral contains $2^\xi$ terms. We can derive the upper bound for the variance in Eq.~\eqref{eq:thm2-2} based on the assumption that $A_{l,k}^j$ is traceless with $\Tr(A_{l,k}^{j2})\leq 2^m,\forall l,i,k$.
\end{proof}

\cref{thm:2} indicates that, under particular assumptions on the adversarial PVQC, even if the encoder only satisfies unitary $2$-design on each of the subspace $SU(2^m)$ for any $m\geq 2$, the variance of the gradients for adversarial PVQC still decreases exponentially as the system size increases. By exploiting this scheme, we can reduce the gate count required in \cref{thm:1} from $O(n^2)$ to $O(\xi m^2)=O(n)$. Compared with \cref{thm:1}, the codebook requires fewer experimental resources at the price of a larger upper bound on the variance for the gradients. We mention that this encoder scheme carries over to  adversarial PVQCs with other inner structures, although we can only  analytically derive the variance bound under some constraints for the adversary due to technique difficulties.

We stress that our approach does not rely on any specific properties of the quantum classifiers $V(\Theta)$. It  does not require that $V(\Theta)$ is unitary 2-design and applies to arbitrary quantum classifiers. 
Therefore, we can avoid the barren plateau landscape when  training the quantum classifier by using shallow circuits or some quantum circuits with specific structures that are not unitary $2$-design, such as quantum convolutional neural networks \cite{Cong2019Quantum,Pesah2021Absence}. Even though we only rigorously prove the case for the loss function that can be regarded as an expectation value over Hermitian operator $H$, our method can also effectively protect quantum classifiers equipped with other loss functions. This claim is supported by the numeric results using Kullback-Leibler (KL) divergence \cite{Kullback1951Information} in the subsequent paragraphs.

To verify that the scaling results in the above theorem are valid for quantum machine learning models with modest system sizes and different loss functions, we carry out numerical simulations on classifying topological phases for the ground states of the cluster-Ising model \cite{Son2011Quantum,Smacchia2011Statistical}:
\begin{align}\label{eq:CIM}
H(\lambda)=-\sum_{i=2}^{n-1}\sigma_{i-1}^x\sigma_{i}^z
\sigma_{i+1}^x+\lambda\sum_{i=1}^n\sigma_i^y\sigma_{i+1}^y,
\end{align}
where $\sigma_i^\alpha,\alpha=x,y,z$ denotes the Pauli matrices on the $i$-th qubit and $\lambda$ is the interaction strength. Here, we take the open boundary condition. This model features a phase transition at $\lambda=1$, between the cluster phase for $0<\lambda<1$ and the antiferromagnetic phase for $\lambda>1$. We sample the Hamiltonian with a different parameter $\lambda$ from $0$ to $2$ and compute the corresponding ground states. We then construct the dataset using these ground states with the corresponding labels. We carry out the classification task using variational quantum classifiers of varying systems sizes from four to fourteen qubits and depth ten. We consider two types of loss functions for the classifiers: (i) the normalized square loss $1-\abs{\bra{\phi}\ket{\psi}_{\text{out}}}^2$ where $\ket{\psi}_{\text{out}}$ is the output state at the end of the circuit in Fig.~\ref{fig:BPModel} and $\ket{\phi}$ is the states encoded by the target labels ; (ii) the KL divergence between $\ket{\psi}_{\text{out}}$ and $\ket{\phi}$. We construct the encoder via a PVQC of four layers and sample the gradients from an adversarial PVQC of four layers. The results are obtained by averaging over variational encoders and adversarial PVQC with random parameters and input data samples. Further details for numeric results are provided in Appendix.~\ref{appsec:2}. As shown in Fig.~\ref{fig:Numeric}(a), the expectation values of the gradient along any directions in the adversarial PVQC converges to zero rapidly as we increase the number of samples, which is consistent with Eq.~\eqref{eq:thm1-1}. From Fig.~\ref{fig:Numeric}(b), we can observe that the variance of the gradients decays exponentially as the system size increase from four to fourteen qubits. The outcome from this numerical simulation fits the result for global encoder settings given by Eq.~\eqref{eq:thm1-2}. In Fig.~\ref{fig:Numeric}(c), we perform numerical experiments for the local encoder settings at $m=2$ in Eq.~\eqref{eq:thm2-2}. 
We construct the encoder by using a PVQC that can be written as a tensor product of two-qubit blocks each satisfies unitary $2$-design by randomly changes the
parameters in the block. The two-qubit blocks are set to be a two-layer variational quantum circuit with the inner structure described in Appendix~\ref{appsec:2}. We observe that the variance of gradient approaches zero rapidly as the system size increases. The numerical result 
shows the exponential decay of gradients predicted in Eq.~\eqref{eq:thm2-2}.

\section{Defending Local Adversarial Noises by Black-Box Quantum Error Correction}\label{sec:4}
\begin{figure*}
    \centering
    \includegraphics[width=0.90\textwidth]{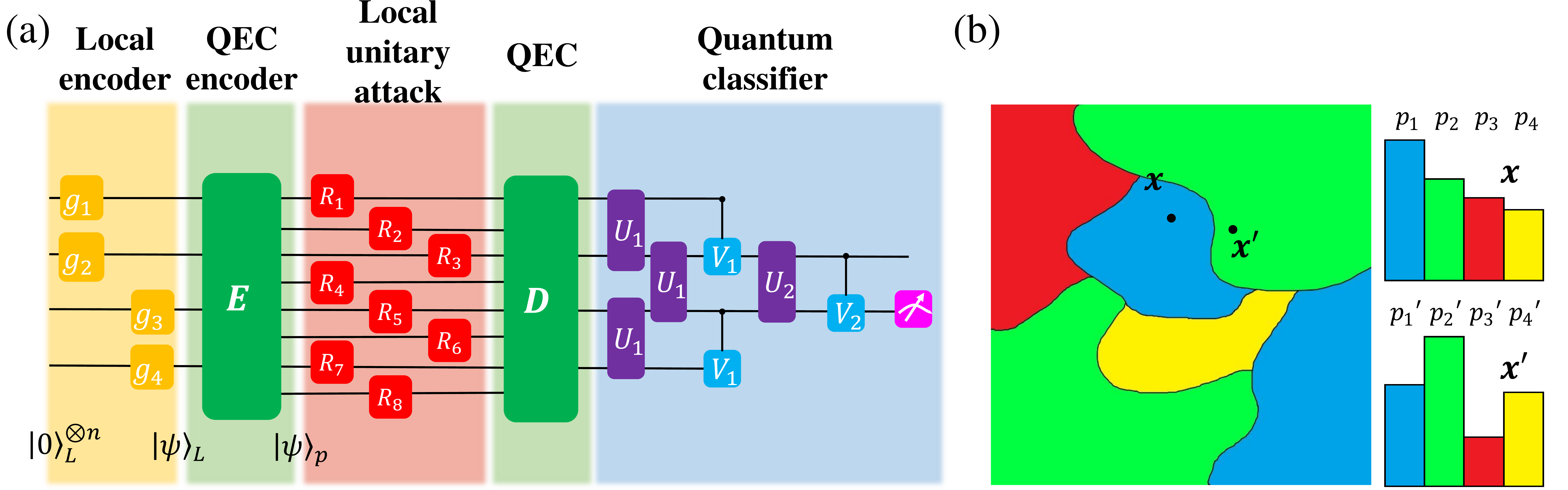}
    \caption{(a) An illustration of exploiting black-box quantum error correction (QEC) encoders to defend against local unitary adversarial attacks. An initial state $\ket{0}^{\otimes n}$ is sequentially encoded by a local encoder and a QEC encoder into the logical state $\ket{\psi}_L$ and the physical state $\ket{\psi}_p$. The physical state is exposed to local adversarial attacks from potential adversaries. It then enters a QEC decoder and is classified by a quantum classifier. (b) A sketch of the connection between quantum differential privacy (QDP) and adversarial robustness. The quantum classifier maps input data $x$ into a probability distribution $\{p_1,...,p_4\}$ and predicts its label according to the maximum likelihood. The different labels are distinguished by different colors. Given a perturbed data $x'$ with $d(x,x')\leq\tau$, $\epsilon$-QDP limits the shift on probability distribution by bounding $|\ln(p_i'/p_i)|\leq\epsilon$ $(i=1,\cdots,4)$. 
    }
    \label{fig:QECModel}
\end{figure*}

As we mentioned in the previous section, the adversarial perturbation can be regarded as experimental noises in the worst case. Under experimental settings, most operations and noises are local \cite{Arute2019Quantum,Wu2021Strong}. Therefore, in this section we consider the case in which both the adversarial perturbation and state preparation can be written as tensor products of single-qubit rotations \cite{Lu2020Quantum}. This setting is widely employed in qubit-encoding quantum computation and machine learning \cite{Giovannetti2008Quantum}. We consider the quantum classifier model $C$ mentioned in Sec.~\ref{sec:2}. We first analytically evaluate the vulnerability of quantum classifiers against such local adversarial perturbations. We suppose the quantum classifier $h:\mathcal{H}\to S$ maps the locally encoded data from $\bigotimes_{i=1}^n SU(2)$ to a label set $S=\{s_1,..,s_K\}$ that contains $K$ labels. We assume that the input data sample $g_\rho$ is chosen from $\mathcal{H}$ according to a probability measure $\mu(\cdot)$. We denote $\mu(h^{-1}(s_k))$ to be the fraction of data that will be assigned the label $s_k$ by the classifier. We now introduce the following measure of adversarial risk:

\begin{definition}\label{def:1}
Consider a hypothesis function $h:\mathcal{H}\to S$. Suppose the input data $\rho$ is chosen from $\mathcal{H}$ according to the measure $\mu(\cdot)$. 
Suppose an adversarial attack $A:\rho\to\rho',\forall \rho\in\mathcal{H}$ occurs under the constraint $d(\rho,\rho')\leq\epsilon$, we denote $M=\{\rho\in\mathcal{H}|h(\rho)\neq h(\rho')\}$ to be the set containing all the states that can be made as adversarial data samples. The adversarial risk is defined as $\mu(M)$.
\end{definition}

We consider the set of input states that can be encoded by a local unitary operator on a certain initial state (e.g., the $\ket{0}^{\otimes n}$ state) and thus the classification of the quantum data is equivalent to the classification of special unitary groups $\bigotimes_{i=1}^n SU(2)$. For technical simplicity, we assume that the input data sample $g_\rho$ is uniformly chosen from $\mathcal{H}$ according to the Haar measure for each qubit $\mu_H^{\otimes n}(\cdot)$, and denote $\mu_H^{\otimes n}(h^{-1}(s_k))$ to be the fraction of data that will be assigned the label $s_k$ by the classifier. For two states $\rho=g_\rho\ket{0}^{\otimes n}$ and $\sigma=g_\sigma\ket{0}^{\otimes n}$, where $g_\rho=\bigotimes_{i=1}^ng_\rho^i$ and $g_\sigma=\bigotimes_{i=1}^ng_\sigma^i$ are chosen from $\bigotimes_{i=1}^n SU(2)$, we exploit the normalized Hamming distance to measure the difference between $g_\rho$ and $g_\sigma$:
\begin{align}\label{eq:HamDis}
d_{\text{NH}}(g_\rho,g_\sigma)=\frac{1}{n}\sum_{i=1}^n\bm{1}[g_\rho^i\neq g_\sigma^i].
\end{align}
This normalized Hamming distance measures the fraction of unequal $g_\rho^i$ and $g_\sigma^i$ for all the qubits. We can then deduce the following theorem concerning the effectiveness of local unitary adversarial attack:

\begin{theorem}\label{thm:3}
Consider a quantum classifier that maps an input sample from $\bigotimes_{i=1}^nSU(2)$ to a $K$-label set $S=\{s_1,...,s_K\}$. Suppose we choose an operator $g_\rho$ from $\bigotimes_{i=1}^nSU(2)$ according to the Haar measure on each qubit $\mu_H^{\otimes n}(\cdot)$. Without loss of generality, we assume $\mu_H^{\otimes n}(h^{-1}(s_1))\geq\mu_H^{\otimes n}(h^{-1}(s_2))\geq...\geq\mu_H^{\otimes n}(h^{-1}(s_K))$. There always exists a perturbation $g_\rho\to g_{\rho'}$ with $d_{\text{NH}}(g_\rho,g_{\rho'})\leq\tau$, such that the adversarial risk is greater than $R\in(0,1)$ if
\begin{align}\label{eq:thm3}
\tau^2\geq\frac{1}{n}\min_{k=2,3,...,K}\ln\left[\frac{4k}{\mu_H^{\otimes n}(h^{-1}(s_k))(1-R)}\right].
\end{align}
\end{theorem}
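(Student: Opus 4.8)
The plan is to establish Theorem 3 via the concentration of measure phenomenon on the product space $\bigotimes_{i=1}^n SU(2)$ equipped with the normalized Hamming distance. The key observation is that the preimage $h^{-1}(s_k)$ of every label is a subset of this space with some measure $\mu_H^{\otimes n}(h^{-1}(s_k)) =: a_k$, and concentration of measure tells us that the $\tau$-neighborhood of any set of non-negligible measure quickly engulfs almost the entire space. Thus, if we start from a random $g_\rho$, with overwhelming probability it lies within Hamming distance $\tau$ of \emph{every} label class whose measure is not too small; in particular it lies within distance $\tau$ of a class carrying a \emph{different} label than $h(g_\rho)$ itself, which immediately produces an adversarial perturbation and places $g_\rho$ in the adversarial set $M$.

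\textbf{Step 1: Concentration of measure on the product space.} First I would invoke the measure-concentration / isoperimetric inequality for product spaces with the normalized Hamming metric (the blow-up / McDiarmid-type bound, or equivalently the statement that for $A \subseteq X^n$ with $\mu(A) = a$, the $\tau$-neighborhood $A_\tau = \{x : d_{\text{NH}}(x, A) \leq \tau\}$ satisfies $\mu(A_\tau) \geq 1 - \frac{1}{a}e^{-2n\tau^2}$, or a closely related form with constant in the exponent). Applying this to $A = h^{-1}(s_k)$ gives $\mu_H^{\otimes n}\big((h^{-1}(s_k))_\tau\big) \geq 1 - \frac{1}{a_k}e^{-2n\tau^2}$.

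\textbf{Step 2: Union bound over label classes and identifying adversarial points.} Next I would fix a point $g_\rho$ with label $h(g_\rho) = s_j$. For any index $k \neq j$, if $g_\rho \in (h^{-1}(s_k))_\tau$ then there is some $g_{\rho'}$ with $d_{\text{NH}}(g_\rho, g_{\rho'}) \leq \tau$ and $h(g_{\rho'}) = s_k \neq s_j$, so $g_\rho \in M$. Hence $M \supseteq \bigcup_{k \neq j}\big((h^{-1}(s_k))_\tau \cap h^{-1}(s_j)\big)$, and more globally, the non-adversarial set is contained in the set of points lying outside \emph{all but one} of the neighborhoods $(h^{-1}(s_k))_\tau$. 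Counting carefully: a point fails to be adversarial only if it lies outside $(h^{-1}(s_k))_\tau$ for every $k$ except possibly $k = j$. Summing the complement probabilities via a union bound over the $K-1$ relevant indices and optimizing which labels to include, the measure of non-adversarial points is at most $\sum_{k} \frac{1}{a_k}e^{-2n\tau^2}$ restricted to an appropriately chosen family; the ordering assumption $a_1 \geq \dots \geq a_K$ lets me take the minimum over $k$ of a single-term bound $\frac{2k}{a_k}e^{-2n\tau^2}$ or similar (the factors of $k$ and $4$ in \eqref{eq:thm3} come from bookkeeping the union bound and from the two-sided nature of "different label"). Rearranging $\mu(M) \geq 1 - \frac{4k}{a_k}e^{-2n\tau^2} \geq R$ into a condition on $\tau^2$ yields exactly \eqref{eq:thm3}.

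\textbf{The hard part} will be getting the constants right and choosing the correct isoperimetric statement: the Hamming-metric concentration on a product of $n$ copies of a fixed probability space is standard, but the precise exponent ($2n\tau^2$ vs.\ $n\tau^2/2$, etc.) and the way the prefactor $1/a_k$ enters must be matched to produce the $\ln[4k/(a_k(1-R))]$ expression, including correctly handling that $g_\rho$'s own label $s_j$ must be excluded from — or at least accounted for in — the union bound, which is where the extra factor of $k$ (running over the top $k$ classes) rather than $K$ arises. A secondary subtlety is ensuring that "there exists a perturbation achieving risk $> R$" is phrased as a statement about the \emph{fraction} of inputs that admit a small adversarial perturbation, so that the concentration bound on neighborhoods translates directly into a lower bound on $\mu(M)$; I would make the quantifiers precise at the outset to avoid conflating "for a fixed attack, the risk is large" with "for each input, a small attack exists."
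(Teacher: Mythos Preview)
Your approach is essentially the paper's: concentration of measure on the product space with the normalized Hamming metric, followed by a De Morgan/union argument on the $\tau$-neighborhoods of the label classes. The paper uses the Levy-family form of the blow-up lemma, giving $\mu(A_\tau)\geq 1-\tfrac{4}{\mu(A)}e^{-n\tau^2}$ (exponent $n\tau^2$, prefactor $4=l_1^2$), and it avoids your bookkeeping worry about excluding the own-label index $j$ by noting that \emph{every} point in $\bigcap_{i=1}^k (h^{-1}(s_i))_\tau$ with $k\geq 2$ is adversarial regardless of its own label; requiring each neighborhood to have measure at least $R^*=(k-1+R)/k$ then gives intersection measure $\geq kR^*-(k-1)=R$ and produces the bound \eqref{eq:thm3} directly.
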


\begin{proof}
We provide the intuition here and the technical details for the full proof are provided in Appendix.~\ref{appsec:3}. Notice that the input data space $\bigotimes_{i=1}^n SU(2)$, when equipped with the Haar measure $\mu_H^{\otimes n}(\cdot)$ and the normalized Hamming distance $d_{\text{NH}}(\cdot)$, forms a $(2,\frac{n}{2^n})$-Levy family \cite{Giordano2007Some,Talagrand1995Concentration}. By exploiting the concentration of measure phenomenon on the Levy family \cite{Mahloujifar2019Curse,Gromov1983Topological}, we show that the measure of all data within distance $\tau$ from a subset $\mathcal{H}'\subseteq \bigotimes_{i=1}^n SU(2)$ can be bounded below by $R'$, if $\tau^2\geq\frac{1}{n}\ln[\frac{4}{\mu_H^{\otimes n}(\mathcal{H}')(1-R')}]$. We then use De Morgan's law to prove that in the case of a $K$-label classification, by choosing $k=2,3,\ldots,K$ that minimizes $\ln[\frac{4k}{\mu_H^{\otimes n}(h^{-1}(s_k))(1-R)}]$, we can bound the adversarial risk below by $R$. This ends the proof of the theorem.
\end{proof}

The above theorem indicates that, for any quantum classifiers receiving input data of $n$ qubits, an adversarial attack that only changes a fraction $O(\frac{1}{\sqrt{n}})$ of qubits will result in a moderate adversarial risk bounded below by $R$. As the system size $n$ increases, the vulnerability of a quantum classifier becomes more severe even for local unitary adversarial attacks. It has been shown in Ref. \cite{Liu2019Vulnerability} that in the setting of global encoding quantum data from $SU(d)$ ($d=2^n$) and global adversarial perturbation, an perturbation of $O(\frac{1}{\sqrt{d}})$ strength under the Hilbert-Schmidt distance measure can guarantee a moderate adversarial risk. Compared with this global case, the adversarial risk under a local unitary attack is not as severe since additional constraints has been assumed for possible attacks. However, for large quantum machine learning systems, Eq.~\eqref{eq:thm3} still shows that the prediction is unstable even under a tiny noise. We remark that Eq.~\eqref{eq:thm3} still holds for other distance measures, such as the normalized Hilbert-Schmidt distance $d_{\text{NHS}}(g_\rho,g_\sigma)=\frac{1}{n}\sum_{i=1}^nd_{\text{HS}}(g_\rho^i,g_\sigma^i)$ that calculates the average Hilbert-Schmidt distance between each qubit. This follows from the fact that the normalized Hilbert-Schmidt distance is always bounded above by the normalized Hamming distance. 

The quantum error correction (QEC) codes \cite{Nielsen2010Quantum,Gottesman1997Stabilizer} are widely used in quantum computation to protect the computation from local noises and are believed to be a crucial building block in the future implementation of quantum computers. Inspired by this, it would be natural to think whether quantum error correction codes can effectively protect quantum machine learning systems from local unitary adversarial attacks. A typical QEC procedure contains an encoder $E$, an error correction circuit $\mathcal{C}$ and a decoder $D$. The encoder $E$ encodes a logical quantum state $\rho_L$ from the logical Hilbert space $\mathcal{H}_L$ into a physical quantum state $\rho_P$ from physical Hilbert space $\mathcal{H}_P$
When errors occur, we perform an error correction on physical qubits to correct particular types of errors and use the decoder to recover the original logical state $\rho_L$. A popular choice for QEC is the $[n_0,k,t]$ code \cite{Nielsen2010Quantum}, which encodes each $k$ logical qubits into $n_0$ physical qubits and is able to correct $\lfloor\frac{t}{2}\rfloor$ erroneous physical qubits for each logical qubit. Without loss of generality, we consider the case when $k=1$ and $t=3$. The corresponding QEC codes can correct one local error for each logical qubit. 

We consider the model in Fig.~\ref{fig:QECModel}(a). The adversarial settings are similar to Sec.~\ref{sec:3} except that we assume both logical state $\ket{\psi}_L$ and the adversarial attack are local. To protect the quantum machine learning systems from such adversarial attacks, we consider applying a quantum error correction encoder after the state preparation stage and the corresponding decoder before the classification stage. The QEC encoder is a black-box oracle for the adversary and thus can be regarded as a \textit{random} encoder that encodes each logical qubit into physical qubits and is able to correct particular types of local errors on these physical qubits. We remark that the assumption of a random QEC encoder can be experimentally practical. A straightforward approach is to permutate the physical qubits randomly such that the adversary does not know the corresponding encoding structures between logical qubits and physical qubits. The fact that short random circuits are good QEC codes \cite{Brown2013Short} indicates that random QEC codes can also be realized using circuits only containing $O(n\log n)$ gates.

To quantify the enhancement of adversarial robustness, we adapt the idea of quantum differential privacy (QDP) and utilize it as a measure for adversarial robustness \cite{Zhou2017Differential,Aaronson2019Gentle,Arunachalam2020Quantum}. Differential privacy \cite{Dwork2009Differential} is the property of an algorithm whose outputs can not be distinguished when inputting neighboring dataset. We can thus measure the sensitivity and vulnerability of the algorithm when changing the input using the differential privacy. For two input data samples that are separated by a small distance, differential privacy bounds the distance of the outputs after the algorithm. 
A formal definition of quantum differential privacy is given below:

\begin{definition}\label{def:2}
(Quantum differential privacy \cite{Zhou2017Differential}) Consider the quantum algorithm $\mathcal{Q}$ and a measurement $\mathcal{M}$ on its output. The algorithm $\mathcal{Q}$ is said to be $(\epsilon(\tau),\gamma)$-quantum differential privacy if for all input quantum states $\rho,\sigma$ that satisfy $d(\rho,\sigma)\leq\tau$, the following inequality holds for any possible subset $Y$ of all possible outcomes of the measurement:
\begin{align}\label{eq:def2}
\Pr[\mathcal{M}(\mathcal{Q}(\rho))\in Y]\leq e^{\epsilon(\tau)}\Pr[\mathcal{M}(\mathcal{Q}(\sigma))\in Y]+\gamma,
\end{align}
where $\epsilon(\tau)$ is a function of distance $\tau$.
\end{definition}

For technical simplicity, we focus on the case $\gamma=0$, referred to as $\epsilon$-QDP. As shown in Fig.~\ref{fig:QECModel}(b), if we consider two neighboring $x$ and $x'$ input quantum data, the $\epsilon$-QDP property bounds the difference between the probability distributions $\{p_1,p_2,p_3,p_4\}$ and $\{p_1',p_2',p_3',p_4'\}$ after the classification by bounding each $p_i/p_i'\in(e^{-\epsilon},e^{\epsilon})$. We focus on the locally encoded states in this section and exploit the normalized Hamming distance as a distance measure. It is shown in Refs. \cite{Zhou2017Differential,Du2021Quantum,Aaronson2019Gentle} that adding any amount of white noise can make the algorithm satisfy quantum differential privacy under trace distance and normalized Hamming distance. Therefore, it is reasonable to assume that under experimental settings the quantum machine learning system satisfies quantum differential privacy as we can always keep a tiny amount of quantum white noise whose influence is negligible. We remark that as distance $\tau$ increases, the corresponding $\epsilon(\tau)$ will always increase according to the definition given by Eq. \eqref{eq:def2}. For a quantum classification algorithm $\mathcal{Q}:\mathcal{H}\to\mathbb{R}^{\abs{S}}$ that maps a quantum data sample to a probability distribution on the label set $S$, the quantum differential privacy property in \cref{def:2} has direct connection with quantum adversarial risk in \cref{def:1}. The adversarial risk for $\mathcal{Q}$ can be explicitly written as
\begin{align}\label{eq:Advrisk}
\mathbb{E}_{\rho\in\mathcal{H}}\left[\sup_{\delta:\norm{\delta}\leq\tau}\Pr_{s_1\sim\mathcal{Q}(\rho),s_2\sim\mathcal{Q}(\rho+\delta)}(s_1\neq s_2)\right],
\end{align}
where $\delta$ is the adversarial perturbation and the expectation value is averaged over all choice of $\rho$ according to the measure $\mu(\cdot)$. By Jensen's inequality, the probability for two random variable chosen from probability distribution $P=(p_1,...,p_{\abs{S}})$ and $Q=(q_1,...,q_{\abs{S}})$ having the same value can be bounded by $\sum_i p_iq_i\leq\exp(\sum_ip_i\log(q_i))=\exp(-d_{KL}(P,Q)-H(P))$. Here, $H(\cdot)$ is the Shannon entropy and $d_{KL}(\cdot)$ is the KL divergence between two distributions. Noticing that the algorithm $\mathcal{Q}$ is $\epsilon(\tau)$-QDP, $P=\mathcal{Q}(\rho), Q=\mathcal{Q}(\rho+\delta)$, and $\norm{\delta}\leq\tau$, $\abs{\ln(p_i/q_i)}$ is bounded by $\epsilon(\tau)$. As a result, the KL divergence between $P$ and $Q$ is bounded above by $2\epsilon(\tau)^2$ \cite{Dwork2010Boosting}. Therefore, we can derive the following information-theoretical upper bound for the adversarial risk of $\mathcal{Q}$ that is $\epsilon(\tau)$-QDP.
\begin{proposition}\label{thm:QDPAdvrisk}
Assume we have a quantum classifier $\mathcal{Q}:\mathcal{H}\to\mathbb{R}^{\abs{S}}$ that satisfies $\epsilon(\tau)$-QDP. When performing an adversarial attack $A:\rho\to\rho'$ with $d(\rho,\rho')\leq\tau$, the adversarial risk $R(\tau)$ is bounded above by
\begin{align}\label{eq:thmQDPAdvrisk}
R(\tau)\leq 1-e^{-2\epsilon(\tau)^2}\mathbb{E}_{\rho\in\mathcal{H}}\left[e^{-H(\mathcal{Q}(\rho))}\right],
\end{align}
where the expectation value is averaged over all $\rho$ chosen uniformly from $\mathcal{H}$ according to the measure $\mu(\cdot)$.
\end{proposition}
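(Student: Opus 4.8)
The plan is to reduce the adversarial risk, written in the form of Eq.~\eqref{eq:Advrisk}, to an estimate on the \emph{collision probability} of two output distributions, and then to control that collision probability using only the $\epsilon(\tau)$-QDP hypothesis. Fix an input $\rho\in\HH$ and an admissible perturbation $\delta$ with $\norm{\delta}\leq\tau$, and set $P=\mathcal{Q}(\rho)=(p_1,\dots,p_{\abs{S}})$, $Q=\mathcal{Q}(\rho+\delta)=(q_1,\dots,q_{\abs{S}})$. Since the two predicted labels are drawn independently, $\Pr_{s_1\sim P,\,s_2\sim Q}(s_1\neq s_2)=1-\sum_i p_i q_i$, so the integrand of Eq.~\eqref{eq:Advrisk} equals $\sup_{\delta}\big[1-\sum_i p_i q_i\big]$, and it suffices to produce a \emph{lower} bound on $\sum_i p_i q_i$ that is uniform in $\delta$ and depends on $\rho$ only through $H(\mathcal{Q}(\rho))$.

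First I would apply Jensen's inequality to the concave function $\log$, treating $q_i$ as a random variable distributed according to $P$: this gives $\log\!\big(\sum_i p_i q_i\big)\geq\sum_i p_i\log q_i=-d_{KL}(P\|Q)-H(P)$, hence $\sum_i p_i q_i\geq e^{-d_{KL}(P\|Q)}e^{-H(P)}$. The useful feature here is that $H(P)=H(\mathcal{Q}(\rho))$ has no $\delta$-dependence, so only $d_{KL}(P\|Q)$ remains to be controlled. To do this I would invoke $\epsilon(\tau)$-QDP: applying the defining inequality Eq.~\eqref{eq:def2} with $\gamma=0$ to the singleton outcome sets $Y=\{i\}$, in both directions, gives the pointwise likelihood-ratio bound $\abs{\ln(p_i/q_i)}\leq\epsilon(\tau)$ for every $i$, because $d(\rho,\rho+\delta)\leq\tau$. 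Feeding this into the standard differential-privacy estimate of Ref.~\cite{Dwork2010Boosting} yields $d_{KL}(P\|Q)\leq 2\epsilon(\tau)^2$, and therefore $\sum_i p_i q_i\geq e^{-2\epsilon(\tau)^2}e^{-H(\mathcal{Q}(\rho))}$ uniformly over all admissible $\delta$.

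To conclude, I would substitute this bound back into Eq.~\eqref{eq:Advrisk}: the supremum over $\delta$ of $1-\sum_i p_i q_i$ is at most $1-e^{-2\epsilon(\tau)^2}e^{-H(\mathcal{Q}(\rho))}$, and taking the expectation over $\rho\sim\mu(\cdot)$ and pulling the deterministic factor $e^{-2\epsilon(\tau)^2}$ outside the expectation gives exactly Eq.~\eqref{eq:thmQDPAdvrisk}. As a consistency check, since $H(\mathcal{Q}(\rho))\geq 0$ the right-hand side lies in $[\,1-e^{-2\epsilon(\tau)^2},\,1\,]\subseteq[0,1]$, as a risk should.

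The main difficulty is not the computation but orienting the two estimates correctly and sourcing them properly. Jensen's inequality must be used to \emph{lower}-bound the collision probability $\sum_i p_i q_i$ (the opposite direction would push the risk bound the wrong way), and the step from the pointwise bound $\abs{\ln(p_i/q_i)}\leq\epsilon(\tau)$ to $d_{KL}(P\|Q)\leq 2\epsilon(\tau)^2$ must be justified carefully; this is precisely where restricting to $\gamma=0$ matters, since a nonzero $\gamma$ would break the pointwise likelihood-ratio control and leave an additive error that does not collapse into the clean exponential form of Eq.~\eqref{eq:thmQDPAdvrisk}. A lesser point is the measurability of $\delta\mapsto\Pr(s_1\neq s_2)$ and the handling of the supremum over the perturbation ball, which is harmless because the lower bound on $\sum_i p_i q_i$ holds for every admissible $\delta$ simultaneously.
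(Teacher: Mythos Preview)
Your proposal is correct and follows essentially the same argument as the paper: Jensen's inequality to lower-bound the collision probability $\sum_i p_i q_i$ by $\exp(-d_{KL}(P\|Q)-H(P))$, then the pointwise likelihood-ratio bound from $\epsilon(\tau)$-QDP combined with the estimate of Ref.~\cite{Dwork2010Boosting} to control $d_{KL}(P\|Q)\leq 2\epsilon(\tau)^2$. Your version is in fact more carefully oriented than the paper's inline sketch, which writes the Jensen step with the inequality sign reversed (a typo); you have it the right way around.
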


It is worthwhile to mention that $\mathbb{E}_{\rho\in\mathcal{H}}\left[e^{-H(\mathcal{Q}(\rho))}\right]$ is a constant that only depends on the property of the classifier $\mathcal{Q}$ itself. When the classifier is well-trained and can provide the correct label with a large confidence, $\mathbb{E}_{\rho\in\mathcal{H}}\left[e^{-H(\mathcal{Q}(\rho))}\right]$ is close to $1$. In particular, if $\mathcal{Q}$ predicts the label with a unity confidence, $\mathbb{E}_{\rho\in\mathcal{H}}\left[e^{-H(\mathcal{Q}(\rho))}\right]=1$. As we decrease $\epsilon(\tau)$, the adversarial risk $R(\tau)$ will decrease polynomially. This indicates that one can improve the adversarial robustness by simply amplifying the quantum differential privacy property (i.e. decrease the $\epsilon$ parameter). This leads us to the next theorem concerning the amplification of quantum differential privacy using black-box QEC:

\begin{theorem}\label{thm:4}
Suppose we use a quantum classifier that satisfies $\epsilon(\tau)$-QDP under normalized Hamming distance. We apply a QEC encoder which encodes each logical qubit into $n_0$ physical qubits and is able to correct an arbitrary error on one of these qubits. Assume that the inner structure of the QEC is unknown by the adversary. We randomly choose arbitrary $\rho$ and $\sigma$ on the physical qubits with $d_{\text{NH}}(\rho,\sigma)\leq\tau$, then for any subset $Y$ of all possible outcomes of the measurement $\mathcal{M}$, $\Pr[\mathcal{M}(\mathcal{Q}(\rho))\in Y]\leq e^{\epsilon_{\text{QEC}}}\Pr[\mathcal{M}(\mathcal{Q}(\sigma))\in Y]$, where
\begin{align}\label{eq:thm4}
\epsilon_{\text{QEC}}=\epsilon\left(\frac{n_0(n_0-1)\tau^2}{\delta}\right),
\end{align}
with probability at least $1-\delta$.
\end{theorem}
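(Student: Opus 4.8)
The plan is to turn the randomness of the QEC encoder into a \emph{dilution} argument. From the adversary's point of view a black-box QEC encoder is equivalent to a uniformly random assignment of the $n$ physical qubits to the $n_L := n/n_0$ logical blocks of size $n_0$ (Alice may simply precompose the code with a random permutation of the physical wires). A local attack with $d_{\text{NH}}(\rho,\sigma)\le\tau$ is a tensor product of single-qubit unitaries acting nontrivially on a set $M$ of physical qubits with $|M|\le\tau n$. Call a logical block \emph{bad} if it contains at least two qubits of $M$. Every \emph{good} block then carries at most one faulty physical qubit, which the code corrects exactly, so after the correction circuit $\mathcal{C}$ and the decoder $D$ the decoded logical states agree on all good blocks; writing $N_{\text{bad}}$ for the number of bad blocks, the induced logical inputs $\rho_L,\sigma_L$ satisfy $d_{\text{NH}}(\rho_L,\sigma_L)\le N_{\text{bad}}/n_L$.

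The second step is a one-line concentration bound on $N_{\text{bad}}$ over the random encoder. A block with $j\ge 2$ faulty qubits contains at least one pair of faulty qubits, so $N_{\text{bad}}$ is at most the number of pairs $\{a,b\}\subset M$ that land in a common block. For a uniformly random assignment $\Pr[a,b\text{ in the same block}]=\tfrac{n_0-1}{n-1}$, hence $\mathbb{E}[N_{\text{bad}}]\le\binom{|M|}{2}\tfrac{n_0-1}{n-1}\le\tau^2 n_0(n_0-1)n_L$, after using $|M|\le\tau n$ and $n/(n-1)\le 2$. Markov's inequality then yields $N_{\text{bad}}\le\tau^2 n_0(n_0-1)n_L/\delta$, and consequently $d_{\text{NH}}(\rho_L,\sigma_L)\le n_0(n_0-1)\tau^2/\delta$, with probability at least $1-\delta$ over the choice of QEC encoder.

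The final step feeds this back into the $\epsilon(\cdot)$-QDP hypothesis. The physical-level pipeline $\mathcal{Q}$ factors as ``error-correct and decode onto the logical register, then run the $\epsilon(\cdot)$-QDP classifier'', and \cref{def:2} is a statement about \emph{any} pair of logical inputs whose normalized Hamming distance is at most the argument of $\epsilon$. On the good event above we therefore obtain $\Pr[\mathcal{M}(\mathcal{Q}(\rho))\in Y]\le e^{\epsilon(d_{\text{NH}}(\rho_L,\sigma_L))}\Pr[\mathcal{M}(\mathcal{Q}(\sigma))\in Y]$, and since $\epsilon(\cdot)$ is nondecreasing in its argument we may enlarge the exponent to $\epsilon(n_0(n_0-1)\tau^2/\delta)=\epsilon_{\text{QEC}}$, which is exactly the asserted inequality.

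I expect the main obstacle to be the rigour of the first reduction rather than the arithmetic: one has to argue carefully that ``the QEC encoder is a black box'' genuinely legitimizes modelling the qubit-to-block assignment as uniformly random (so that an attack crafted without knowledge of the code cannot correlate $M$ with the block structure), and one must verify that the error-correction-plus-decoding channel restores a good block's logical qubit \emph{exactly} for every admissible single-qubit error (and trivially when the block is error-free), so that all surviving logical errors are confined to the bad blocks. The concentration estimate is routine --- Markov already reproduces the stated constant $n_0(n_0-1)\tau^2/\delta$ --- and the closing QDP substitution is immediate once monotonicity of $\epsilon(\cdot)$ is invoked.
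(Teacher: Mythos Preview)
Your proposal is correct and follows essentially the same route as the paper: bound the expected number of logical blocks receiving at least two physical errors by $O(n_0(n_0-1)\tau^2)$, convert this to a high-probability bound via Markov's inequality, and then invoke the $\epsilon(\cdot)$-QDP hypothesis on the decoded logical states together with the monotonicity of $\epsilon$. If anything, your version is more explicit than the paper's, which simply asserts the expected fraction of affected logical qubits is $O(n_0(n_0-1)\tau^2)$ with a citation to \cite{Nielsen2010Quantum}; your pair-counting argument under the uniformly random block assignment (and your flagging of the need to justify that model from the ``black-box'' assumption) supplies exactly the details the paper leaves implicit.
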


\begin{proof}
The normalized Hamming distance measures the fraction of qubits that are different from the legitimate data. Assume an adversarial attack $\rho\to\rho'$ occurs on the physical qubits such that $d_{\text{NH}}(\rho,\rho')\leq\tau$. Under a black-box QEC procedure, a logical qubit becomes erroneous if it contains more than two erroneous physical qubits. Therefore, the expected fraction of logical qubits affected is $O(n_0(n_0-1)\tau^2)$ \cite{Nielsen2010Quantum}. As a consequence, to achieve the bound in \cref{thm:3} on the logical qubits, the adversary should alter at least $O(\frac{1}{n^{1/4}})$ fraction of physical qubits in expectation. The QEC encoder mitigates the adversarial risk in expectation.

By the Markov's inequality, one can choose $\rho$ and $\rho'$ from physical quantum states such that $d_{\text{NH}}(\rho,\rho')\leq\tau$ and
\begin{align}
d_{\text{NH}}(D\circ\mathcal{C}(\rho),D\circ\mathcal{C}(\rho'))\leq\frac{n_0(n_0-1)\tau^2}{\delta}, 
\end{align}
with probability at least $1-\delta$. By the definition of quantum differential privacy, it is guaranteed with probability at least $1-\delta$ that $\Pr[\mathcal{M}(\mathcal{Q}(\rho))\in Y]\leq e^{\epsilon_{\text{QEC}}}\Pr[\mathcal{M}(\mathcal{Q}(\rho'))\in Y]$, where $\epsilon_{\text{QEC}}=\epsilon\left(\frac{n_0(n_0-1)\tau^2}{\delta}\right)$.
\end{proof}

The above theorem indicates that for a quantum classifier satisfying quantum differential privacy, a black-box QEC encoder can effectively amplify the quantum differential privacy property with high probability. In particular, the QEC encoder can promote the $\epsilon(\tau)$-QDP quantum classifier to a new quantum classifier that satisfies $\epsilon(\frac{n_0(n_0-1)\tau^2}{\delta})$-QDP property for at least $1-\delta$ fraction of all possible input data samples. Under the normalized Hamming distance, a QEC encoder can always promote the robustness of the quantum classifier against perturbations added on the input quantum states with large probability, as long as $\tau n_0(n_0-1)/\delta\leq 1$. As $n_0$ is a constant for a fixed QEC encoder, this is a constant threshold for $\tau$.  We remark that a quantum algorithm satisfying quantum differential privacy can be obtained through adding white noise \cite{Du2021Quantum}. However, white noise will erase the information required for classification and should be suppressed for variational quantum classifiers on NISQ devices. In contrast, our approach only assumes the existence of tiny noise and can guarantee $\epsilon$-QDP for arbitrary small $\epsilon>0$ by concatenating QEC encoders. Compared with the previous work \cite{Du2021Quantum} that simply relies on the white noise to produce quantum differential privacy property, our approach prevents the risk of losing too much information due to noises.
\cref{thm:4} opens a door for studying the promotion of adversarial robustness from QEC. Intuitively, a QEC encoder can mitigate the adversarial risk from the local unitary adversarial attack by reducing the bound in Eq.~\eqref{eq:thm3} to $O(\frac{1}{n^{1/4}})$ with a large probability. This is because the QEC can reduce the error rate from $p$ to $p^2$ in expectation \cite{Nielsen2010Quantum}. We mention that it is necessary to keep the inner structure of the QEC encoder confidential to the potential adversary. If the QEC encoder is known by the attacker, the QEC circuit together with the quantum classifier can be regarded as an enlarged quantum classifier with system size $nn_0$. According to \cref{thm:3}, such a quantum classifier is more vulnerable under adversarial attacks.

In fault-tolerant quantum computation, the errors are assumed to occur locally on each qubit for each quantum operation independently with probability below the threshold $p_{\text{thres}}$. To mitigate the influence of these errors, multiple levels of QEC are concatenated to bound the error below an expected value $\zeta$. It has been proved that $O(\log\log(1/\zeta))$ levels of QEC is enough \cite{Nielsen2010Quantum}. As we mentioned in the previous sections, adversarial perturbation are not random experimental noises. Instead, these perturbations are either carefully engineered noises from hostile adversary or worst-case experimental noises. However, \cref{thm:4} indicates that we can always decrease the $\epsilon$ parameter in $\epsilon$-QDP property of the quantum classifier by concatenating additional levels of QEC. We have shown in \cref{thm:3} that adversarial perturbations with strength $O(\frac{1}{\sqrt{n}})$ can lead to moderate adversarial risk under local unitary adversarial attacks. To reduce the potential adversarial risk, we should bound the distance $\tau'=\Omega(\frac{1}{\sqrt{n}})$ after concatenating the QECs. We can thus deduce the following corollary. 
\begin{corollary}
We consider the classifier $\mathcal{Q}$ discussed in \cref{thm:4}. After concatenating $L_{\text{QEC}}$ levels of QEC, one can guarantee with high probability that $\mathcal{Q}$ satisfies $\epsilon\left(\frac{1}{\sqrt{n}}\right)$-QDP for randomly chosen $\rho$ and $\sigma$ with $d_{\text{NH}}(\rho,\sigma)\leq\tau$, as long as
\begin{align}\label{eq:coro1}
L_{\text{QEC}}\geq O(\log\log n).
\end{align}
\end{corollary}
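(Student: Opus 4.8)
The plan is to iterate \cref{thm:4} through the $L_{\text{QEC}}$ concatenation levels and solve the resulting recursion for the effective perturbation size seen by $\mathcal{Q}$. By \cref{thm:4}, passing a physical-layer perturbation of normalized Hamming distance $\tau$ through one black-box QEC level yields, with probability at least $1-\delta'$, an effective distance at most $c\,\tau^2$ on the decoded layer, where $\delta'$ is the per-level failure probability and $c=n_0(n_0-1)/\delta'$. Applying this level by level gives the recursion $\tau_0=\tau$, $\tau_i=c\,\tau_{i-1}^2$, so that after $L_{\text{QEC}}$ levels the classifier acts on an input at normalized Hamming distance at most $\tau_{L_{\text{QEC}}}$ from the unperturbed one.

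I would then solve the recursion: setting $u_i=c\,\tau_i$ turns it into $u_i=u_{i-1}^2$, hence $c\,\tau_i=(c\tau)^{2^{i}}$. Provided the base-case threshold $c\tau<1$ holds---which is exactly the regime $\tau n_0(n_0-1)/\delta'\leq 1$ flagged after \cref{thm:4}---the effective distance $\tau_{L_{\text{QEC}}}=(c\tau)^{2^{L_{\text{QEC}}}}/c$ decays doubly exponentially in $L_{\text{QEC}}$. Since perturbations of size $O(1/\sqrt n)$ already induce a moderate adversarial risk by \cref{thm:3}, I would require $\tau_{L_{\text{QEC}}}\leq 1/\sqrt n$; then, because $\epsilon(\cdot)$ is monotone increasing, $\mathcal{Q}$ is $\epsilon(1/\sqrt n)$-QDP on the decoded inputs. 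Taking logarithms twice, $\tau_{L_{\text{QEC}}}\leq 1/\sqrt n$ rearranges to
\begin{align}
L_{\text{QEC}}\ \geq\ \log_2\!\left(\frac{\tfrac12\ln n-\ln c}{\ln(1/(c\tau))}\right)\ =\ O(\log\log n),
\end{align}
which is the claimed bound.

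The step requiring care---and the main obstacle---is the probability bookkeeping across the $L_{\text{QEC}}$ levels. A union bound caps the total failure probability at $L_{\text{QEC}}\,\delta'$; choosing $\delta'=\delta/L_{\text{QEC}}$ keeps this below $\delta$ at the price of inflating the constant to $c=n_0(n_0-1)L_{\text{QEC}}/\delta=O(\log\log n)$. Since $c$ enters the above estimate only through $\ln c$ (negligible beside $\ln n$) and through the threshold $c\tau<1$, this merely replaces the constant smallness requirement on $\tau$ by the slowly vanishing $\tau=O(1/\log\log n)$, leaving the $O(\log\log n)$ count on $L_{\text{QEC}}$ intact. The one genuine thing to verify is that this mild tightening of the admissible range of $\tau$ stays consistent with the hypotheses inherited from \cref{thm:3,thm:4}; everything else is the routine recursion above.
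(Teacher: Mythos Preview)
Your proposal is correct and follows essentially the same approach as the paper, which simply states that the corollary follows from applying \cref{thm:4} repeatedly while fixing the per-level failure probability to $\delta/L_{\text{QEC}}$. You have in fact supplied considerably more detail than the paper does: the explicit recursion $\tau_i=c\,\tau_{i-1}^2$, its closed form $(c\tau)^{2^{L_{\text{QEC}}}}/c$, and the observation that the inflated constant $c=O(\log\log n)$ enters only through $\ln c$ and the threshold condition, so it does not disturb the $O(\log\log n)$ scaling.
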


The proof of this corollary follows from using \cref{thm:4} repeatedly and fixing the $\delta$ in each level as $\frac{\delta}{L_{\text{QEC}}}$. This corollary indicates that only $O(\log\log n)$ layers of repeated QEC encoders can guarantee $\epsilon(O(\frac{1}{\sqrt{n}}))$-QDP for the quantum classifier. The number of levels of QEC required has a double logarithmic scaling over the system size $n$. We show through this theorem that fault-tolerant quantum computers with black-box QECs are robust against adversarial attacks with a large probability. This result shows the effectiveness of QECs under the condition of even worst-case noises.

\section{Conclusions and Outlook}\label{sec:5}

In this paper, we proposed a general approach to protect quantum learning systems in adversarial scenarios using randomized encoders. We rigorously proved that random unitary encoders forming a unitary $2$-design set can create barren plateaus for any adversarial parametrized variational quantum circuit, which prevent the creations of adversarial perturbations.  To benchmark the performance of our approach, we carried out numerical simulations on classifying topological phases of ground states for the clustered-Ising Hamiltonian. We remark that this approach is feasible on NISQ devices as the the classifiers, adversarial circuits, and encoders can be implemented by variational quantum circuits. In addition, we proved that black-box quantum error correction encoders unknown to the adversary can mitigate the adversarial risk by promoting the differential privacy against local unitary noises. 
Our results develop versatile defense strategies to enhance the reliability and security of quantum learning systems, which may have far-reaching consequences in applications of quantum artificial intelligence based on both near-term and future quantum technologies.

Many questions remain and warrant further investigations. For instance, our discussions in this paper mainly focus on quantum supervised learning scenarios. Yet, unsupervised and reinforcement learning approaches may also suffer from the vulnerability problem \cite{Vorobeychik2018Adversarial}. Thus, it will be interesting and important to develop similar defense strategies in the context of quantum unsupervised or reinforcement learning, where obtaining analytical performance guarantees in a rigorous fashion might more challenging. In addition, how to extend our results to the scenario of quantum delegated learning with multiple clients \cite{Li2021Quantum} is well worth future studies. Finally, it is also of crucial importance to carry out an experiment to demonstrate our defense strategies against adversarial perturbations. This would be a key step toward secure and reliable quantum artificial intelligence technologies.

\section{Acknowledgements}

We thank L.-M. Duan, Soonwon Choi, Liwei Yu, Zidu Liu, Zhide Lu, Wenjie Jiang, and Si Jiang for helpful discussions. This work is supported by the National Natural Science Foundation of China (Grants No. 12075128 and No. T2225008) and the Shanghai Qi Zhi Institute.

\bibliographystyle{apsrev4-1-title}
\bibliography{PromAdvRobRandEncoder}
\clearpage
\onecolumngrid
\appendix
\makeatletter
\setcounter{figure}{0}
\setcounter{equation}{0}
\setcounter{table}{0}
\setcounter{definition}{0}
\setcounter{theorem}{0}
\setcounter{lemma}{0}
\setcounter{proposition}{0}
\renewcommand{\thefigure}{S\@arabic\c@figure}
\renewcommand{\thetable}{S\@arabic\c@table}
\renewcommand{\thedefinition}{S\@arabic\c@definition}
\renewcommand{\thetheorem}{S\@arabic\c@theorem}
\renewcommand{\thelemma}{S\@arabic\c@lemma}

\section{Barren Plateau Induced by Randomized Encodings}\label{appsec:1}
In this appendix, we provide the detailed proofs for \cref{thm:1} and \cref{thm:2} in the main text. We give further analytical and numerical results concerning creating barren plateaus for adversarial parametrized variational quantum circuits (PVQC). To begin with, we provide the formal definition of unitary $t$-design. Consider a polynomial $P_{t,t}(U)$ with the homogeneous degree at most $t$ in the entries of a unitary matrix $U$, and degree $t$ in the complex conjugates of these entries. We can evaluate the average of $P_{t,t}(U)$ under the Haar measure with $\sum_{i=1}^Mp_iP_{t,t}(U_i)$, where $\{U_1,...,U_M\}$ equipped with probability $\{p_1,...,p_M\}$ is called to form the unitary $t$-design. The formal definition \cite{Dankert2009Exact,Renes2004Symmetric,Mcclean2018Barren} is given as follow:

\begin{definition}
Let $P_{t,t}(U)$ be a polynomial of unitary $U$ and its complex conjugate $U^\dagger$, with up to a given degree $t$. An ensemble $\{p_i,U_i\}$, $i=1,...,M$ is called a unitary $t$-design if
\begin{align}
\sum_{i=1}^Mp_iP_{t,t}(U_i)=\int P_{t,t}(U)d\mu_{H}(U)
\end{align}
holds for any possible $P_{t,t}(U)$, where $d\mu_H(U)$ is the Haar measure.
\end{definition}

The above expression can be either exact or approximate, which corresponds to the exact unitary $t$-design or the approximate unitary $t$-design, respectively. 
The unitary $t$-design indicates that the $t$-th moments are (approximately) the same as the corresponding moments with respect to the Haar measure.
The first and second moments over the Haar measure are given by Weingarten functions \cite{Zhang2014Matrix}:
\begin{align}
&\int U^\dagger OUd\mu_H(U)=\frac{\Tr(O)}{d}I,\\
&\int U^\dagger AUXU^\dagger BUd\mu_H(U)=\frac{d\Tr(AB)-\Tr(A)\Tr(B)}{d(d^2-1)}\Tr(X)I+\frac{d\Tr(A)\Tr(B)-\Tr(AB)}{d(d^2-1)}X,
\end{align}
where $d=2^n$ is the dimension of the unitary $U$. Below we prove \cref{thm:1} in the main text.

Consider the adversarial PVQC equipped with parameters $\bm{\theta}$ and the random unitary encoder satisfying the unitary $2$-design as shown in Fig.~\ref{fig:BPModel}(a). When we fix an encoder $E_i$ and initialize the adversarial PVQC with $\bm{\theta}_0$ such that $U(\bm{\theta}_0)=I$, the gradient of the loss function $L$ can be written as:
\begin{align}\label{eq:S4}
\partial_{\theta_l}L(\Theta,E_i;\bm{\theta}_0)=i\bra{\psi}_{\text{in}}E_i^\dagger U_-^\dagger[A_l,U_+^\dagger E_iV^\dagger HVE_i^\dagger U_+]U_-E_i\ket{\psi}_{\text{in}},
\end{align}
where $U_-=\prod_{i=1}^{l-1}\text{exp}(-iA_i\theta_i)W_i$ and $U_+=\prod_{i=l}^{L}\text{exp}(-iA_i\theta_i)W_i$. At $\bm{\theta}_0$, $U_+U_-=I$. Since we have assumed the codebook $C$ equipped with the probability distribution $\{p_i,E_i\}$ forms unitary $2$-design, the average gradients at $\bm{\theta}_0$ are calculated as zero:
\begin{align}
\mathbb{E}_{E_i\in C}[\partial_{\theta_l}L(\Theta,E_i;\bm{\theta}_0)]&=i\int d\mu_H(E)\bra{\psi}_{\text{in}}E^\dagger U_-^\dagger[A_l,U_+^\dagger EV^\dagger HVE^\dagger U_+]U_-E\ket{\psi}_{\text{in}}\nonumber\\
&=i\Tr\left\{\int d\mu_H(E)\ket{\psi}_{\text{in}}\bra{\psi}_{\text{in}}E^\dagger U_-^\dagger[A_l,U_+^\dagger EV^\dagger HVE^\dagger U_+]U_-E\right\}\nonumber\\
&=\frac{i}{2^n}\Tr\left[\rho\left(\Tr\left(U_-^\dagger A_l U_+^\dagger\right)V^\dagger HV-\Tr\left(U_+A_lU_-\right)V^\dagger HV\right)\right]\nonumber\\
&=0,
\end{align}
where $\rho=\ket{\psi}_{\text{in}}\bra{\psi}_{\text{in}}$. Next, we prove the exponential decay variances of the gradients. Since $\text{Var}_{E_i\in C}[\partial_{\theta_l}L(\Theta,E_i;\bm{\theta}_0)]=\mathbb{E}_{E_i\in C}[\partial_{\theta_l}L(\Theta,E_i;\bm{\theta}_0)^2]-\mathbb{E}_{E_i\in C}[\partial_{\theta_l}L(\Theta,E_i;\bm{\theta}_0)]^2=\mathbb{E}_{E_i\in C}[\partial_{\theta_l}L(\Theta,E_i;\bm{\theta}_0)^2]$, the variances can be calculated by the second moment integral:
\begin{align}
\mathbb{E}_{E_i\in C}[\partial_{\theta_l}L(\Theta,E_i;\bm{\theta}_0)^2]&=-\Tr\left\{\int d\mu_H(E)\rho E^\dagger U_-^\dagger[A_l,U_+^\dagger EV^\dagger HVE^\dagger U_+]U_-E\rho E^\dagger U_-^\dagger[A_l,U_+^\dagger EV^\dagger HVE^\dagger U_+]U_-E\right\}\nonumber\\
&=\frac{2}{d(d^2-1)}[d\Tr(A_l^2)-\Tr^2(A_l)][\Tr(\rho H_V^2)-\Tr^2(\rho H_V)]\nonumber\\
&\leq\frac{2}{d^2-1}\Tr(A_l^2)\Tr(\rho H_V^2),
\end{align}
where $H_V=V^\dagger HV$. This completes the proof of \cref{thm:1} in the main text.

We then prove \cref{thm:2} in the main text. We utilize the following Haar measure integral over tensor-product unitary matrices \cite{Zhang2014Matrix}:
\begin{align}\label{eq:tool1}
\int_{U(d_1)}...\int_{U(d_\xi)}(U_1\otimes...\otimes U_\xi)^\dagger X(U_1\otimes...\otimes U_\xi)d\mu_H(U_1)...d\mu_H(U_\xi)=\Tr(X)\bigotimes_{j=1}^\xi\frac{I_j}{d_j},
\end{align}
where $d_j$ is the dimension of the unitary matrix $U_j$ and $I_j$ is the identity of $d_j$ dimensions. Therefore, the expectations of the gradients in Eq.~\eqref{eq:S4} can be calculated as
\begin{align}
&\mathbb{E}_{E_i^j\in C}[\partial_{\theta_l}L(\Theta,E_i;\bm{\theta}_0)]=\int_{U(2^m)}...\int_{U(2^m)}d\mu_H(E^1)...d\mu_H(E^\xi)\partial_{\theta_l}L(\Theta,E=\otimes_{j=1}^\xi E^j;\bm{\theta}_0)=0.
\end{align}

Next, we calculate the variances for the gradients by $\text{Var}_{E_i^j\in C}[\partial_{\theta_l}L(\Theta,E_i;\bm{\theta}_0)]=\mathbb{E}_{E_i^j\in C}[\partial_{\theta_l}L(\Theta,E_i;\bm{\theta}_0)^2]$. We utilize the following subspace Haar measure integral:
\begin{align}\label{eq:tool2}
&\int_{U(d_1)}(U_1\otimes I_{2,...,\xi})^\dagger (A_1\otimes I_{2,...,\xi})(U_1\otimes I_{2,...,\xi})X(U_1\otimes I_{2,...,\xi})^\dagger(B_1\otimes I_{2,...,\xi})(U_1\otimes I_{2,...,\xi})d\mu_H(U_1)\nonumber\\
=&\sum_{\nu,\nu'=1}^{d_2d_3...d_\xi}[\int_{U(d_1)}U_1^\dagger A_1U_1(X_{\nu,\nu'}^1)U_1^\dagger B_1U_1d\mu_H(U_1)]\otimes\ket{\nu}\bra{\nu'}\nonumber\\
=&\frac{1}{d_1(d_1^2-1)}[(d_1\Tr(A_1B_1)-\Tr(A_1)\Tr(B_1))\Tr_1(X)\otimes I_{d_1}+(d_1\Tr(A_1)\Tr(B_1)-\Tr(A_1B_1))X].
\end{align}
In the second line, we decompose the operator $X$ into $\sum_{\nu,\nu'}X_{\nu.\nu'}^1\otimes\ket{\nu}\bra{\nu'}$ using the orthogonal basis $\{\ket{\nu}:\nu=1,2,...,d_2d_3...d_\xi\}$. By repeatedly using the above formula, we can evaluate the Haar measure integral of the form $\int_{U(d_1)}...\int_{U(d_\xi)}(U_1\otimes...\otimes U_\xi)^\dagger (A_1\otimes...\otimes A_\xi)(U_1\otimes...\otimes U_\xi)X(U_1\otimes...\otimes U_\xi)^\dagger(B_1\otimes...\otimes B_\xi)(U_1\otimes...\otimes U_\xi)d\mu_H(U_1)...d\mu_H(U_\xi)$. Now, we are ready to compute the variances for the gradients in the adversarial PVQC $U(\bm{\theta})$. We divide the variance into four terms:
\begin{align}\label{eq:A20}
\text{Var}_{E_i^j\in C}[\partial_{\theta_l}L(\Theta,E_i;\bm{\theta}_0)]=&-\Tr[\int_{U(2^m)}...\int_{U(2^m)}d\mu_H(E^1)...d\mu_H(E^{\xi})\rho E^\dagger U_-^\dagger A_lU_+^\dagger EH_V\rho E^\dagger U_-^\dagger A_lU_+^\dagger EH_V]\\\label{eq:A21}
&-\Tr[\int_{U(2^m)}...\int_{U(2^m)}d\mu_H(E^1)...d\mu_H(E^{\xi})\rho H_VE^\dagger U_+A_lU_-E\rho H_VE^\dagger U_+A_lU_-E]\\\label{eq:A22}
&+\Tr[\int_{U(2^m)}...\int_{U(2^m)}d\mu_H(E^1)...d\mu_H(E^{\xi})\rho E^\dagger U_-^\dagger A_lU_+^\dagger EH_V\rho H_VE^\dagger U_+A_lU_-E]\\\label{eq:A23}
&+\Tr[\int_{U(2^m)}...\int_{U(2^m)}d\mu_H(E^1)...d\mu_H(E^{\xi})\rho H_VE^\dagger U_+A_lU_-E\rho E^\dagger U_-^\dagger A_lU_+^\dagger EH_V],
\end{align}
where $E=E^1\otimes...\otimes E^\xi$, $H_V=V^\dagger HV$, and $\rho=\ket{\psi}_{\text{in}}\bra{\psi}_{\text{in}}$. We calculate each term from Eq.~\eqref{eq:A20} to Eq.~\eqref{eq:A23} using Eq.~\eqref{eq:tool1} and Eq.~\eqref{eq:tool2}:
\begin{align}
&\Tr[\int_{U\left(2^m\right)}...\int_{U\left(2^m\right)}d\mu\left(E^1\right)...d\mu\left(E^{\xi}\right)\rho E^\dagger U_-^\dagger A_lU_+^\dagger EH_V\rho E^\dagger U_-^\dagger A_lU_+^\dagger EH_V]\\\nonumber
=&\frac{1}{\left(2^{2m}-1\right)^\xi}\sum_{k,k'}c_kc_{k'}\sum_{J\subseteq \{1,...,\xi\}}\prod_{j\in J}\left(\Tr\left(A_{l,k}^jA_{l,k'}^j\right)-\frac{1}{2^m}\Tr\left(A_{l,k}^j\right)\Tr\left(A_{l,k'}^j\right)\right)\cdot\\
&\prod_{j\notin J}\left(\Tr\left(A_{l,k}^j\right)\Tr\left(A_{l,k'}^j\right)-\frac{1}{2^m}\Tr\left(A_{l,k}^jA_{l,k'}^j\right)\right)\Tr[\Tr_{j\in J}\left(H_V\rho\right)\otimes \bigotimes_{j\in J}I_{2^m}\cdot H_V\rho]\\
&\Tr[\int_{U\left(2^m\right)}...\int_{U\left(2^m\right)}d\mu\left(E^1\right)...d\mu\left(E^{\xi}\right)\rho H_VE^\dagger U_+A_lU_-E\rho H_VE^\dagger U_+A_lU_-E]\\\nonumber
=&\frac{1}{\left(2^{2m}-1\right)^\xi}\sum_{k,k'}c_kc_{k'}\sum_{J\subseteq \{1,...,\xi\}}\prod_{j\in J}\left(\Tr\left(A_{l,k}^jA_{l,k'}^j\right)-\frac{1}{2^m}\Tr\left(A_{l,k}^j\right)\Tr\left(A_{l,k'}^j\right)\right)\cdot\\
&\prod_{j\notin J}\left(\Tr\left(A_{l,k}^j\right)\Tr\left(A_{l,k'}^j\right)-\frac{1}{2^m}\Tr\left(A_{l,k}^jA_{l,k'}^j\right)\right)\Tr[\Tr_{j\in J}\left(\rho H_V\right)\otimes \bigotimes_{j\in J}I_{2^m}\cdot\rho H_V]\\
&\Tr[\int_{U\left(2^m\right)}...\int_{U\left(2^m\right)}d\mu\left(E^1\right)...d\mu\left(E^{\xi}\right)\rho E^\dagger U_-^\dagger A_lU_+^\dagger EH_V\rho H_VE^\dagger U_+A_lU_-E]\\\nonumber
=&\frac{1}{\left(2^{2m}-1\right)^\xi}\sum_{k,k'}c_kc_{k'}\sum_{J\subseteq \{1,...,\xi\}}\prod_{j\in J}\left(\Tr\left(A_{l,k}^jA_{l,k'}^j\right)-\frac{1}{2^m}\Tr\left(A_{l,k}^j\right)\Tr\left(A_{l,k'}^j\right)\right)\cdot\\
&\prod_{j\notin J}\left(\Tr\left(A_{l,k}^j\right)\Tr\left(A_{l,k'}^j\right)-\frac{1}{2^m}\Tr\left(A_{l,k}^jA_{l,k'}^j\right)\right)\Tr[\Tr_{j\in J}\left(H_V\rho H_V\right)\otimes \bigotimes_{j\in J}I_{2^m}\cdot\rho]\\
&\Tr[\int_{U\left(2^m\right)}...\int_{U\left(2^m\right)}d\mu\left(E^1\right)...d\mu\left(E^{\xi}\right)\rho H_VE^\dagger U_+A_lU_-E\rho E^\dagger U_-^\dagger A_lU_+^\dagger EH_V]\\\nonumber
=&\frac{1}{\left(2^{2m}-1\right)^\xi}\sum_{k,k'}c_kc_{k'}\sum_{J\subseteq \{1,...,\xi\}}\prod_{j\in J}\left(\Tr\left(A_{l,k}^jA_{l,k'}^j\right)-\frac{1}{2^m}\Tr\left(A_{l,k}^j\right)\Tr\left(A_{l,k'}^j\right)\right)\cdot\\
&\prod_{j\notin J}\left(\Tr\left(A_{l,k}^j\right)\Tr\left(A_{l,k'}^j\right)-\frac{1}{2^m}\Tr\left(A_{l,k}^jA_{l,k'}^j\right)\right)\Tr[\Tr_{j\in J}\left(\rho\right)\otimes \bigotimes_{j\in J}I_{2^m}\cdot H_V\rho H_V],
\end{align}
where the $\sum_{J\subseteq \{1,...,\xi\}}$ sums over all possible subsets of $\{1,...,\xi\}$ and the operator $A_l$ in each layer of the adversarial PVQC is decomposed as $A_l=\sum_k c_k \bigotimes_{j=1}^\xi A_{l,k}^j$ (Eq.~\eqref{eq:DecompA} of the main text). By summing up these four terms, we obtain the variance of the gradient as: 
\begin{align}
\nonumber&\text{Var}_{E_i^j\in C}[\partial_{\theta_l}L(\Theta,E_i;\bm{\theta}_0)]=\left(\frac{1}{2^{2m}-1}\right)^\xi\sum_{k,k'}c_kc_{k'}\sum_{J\subseteq \{1,...,\xi\}}\prod_{j\in J}\left(\Tr\left(A_{l,k}^jA_{l,k'}^j\right)-\frac{1}{2^m}\Tr\left(A_{l,k}^j\right)\Tr\left(A_{l,k'}^j\right)\right)\cdot\\\nonumber
&\prod_{j\notin J}\left(\Tr\left(A_{l,k}^j\right)\Tr\left(A_{l,k'}^j\right)-\frac{1}{2^m}\Tr\left(A_{l,k}^jA_{l,k'}^j\right)\right)\text{Tr}[\Tr_{j\in J}\left(\rho\right)\otimes\bigotimes_{j\in J}I_{2^m}\cdot H_V\rho H_V+\Tr_{j\in J}\left(H_V\rho H_V\right)\otimes\\
& \bigotimes_{j\in J}I_{2^m}\cdot\rho-\Tr_{j\in J}\left(\rho H_V\right)\otimes \bigotimes_{j\in J}I_{2^m}\cdot\rho H_V-\Tr_{j\in J}\left(H_V\rho\right)\otimes \bigotimes_{j\in J}I_{2^m}\cdot H_V\rho].
\end{align}
We bound the following term with a constant $C_0$, which does not increase with the total system dimension:
\begin{align}
\nonumber C_0=&\max_{J\subseteq\{1,...,\xi\}}\text{Tr}[\Tr_{j\in J}(\rho)\otimes \bigotimes_{j\in J}I_{2^m}\cdot H_V\rho H_V+\Tr_{j\in J}(H_V\rho H_V)\otimes \bigotimes_{j\in J}I_{2^m}\cdot\rho\\
&-\Tr_{j\in J}(\rho H_V)\otimes \bigotimes_{j\in J}I_{2^m}\cdot\rho H_V-\Tr_{j\in J}(H_V\rho)\otimes \bigotimes_{j\in J}I_{2^m}\cdot H_V\rho].
\end{align}

According to the assumption of \cref{thm:2}, $A_{l,k}^j$ is traceless and $\Tr(A_{l,k}^{j2})\leq 2^m,\forall l,i,k$. We have
\begin{align}
&\Tr(A_{l,k}^jA_{l,k'}^j)\leq\sqrt{\Tr(A_{l,k}^{i2})\Tr(A_{l,k'}^{i2})}\leq 2^m,\\
&\abs{\Tr(A_{l,k}^j)\Tr(A_{l,k'}^j)-\frac{1}{2^m}\Tr(A_{l,k}^jA_{l,k'}^j)}\leq 1.
\end{align}

Hence, we bound the variance as
\begin{align}
\text{Var}_{E_i^j\in C}[\partial_{\theta_l}L(\Theta,E_i;\bm{\theta}_0)]&\leq\left(\frac{1}{2^{2m}-1}\right)^\xi\sum_{k,k'}c_kc_{k'}\sum_{J\subseteq \{1,...,\xi\}}\prod_{j\in J} 2^m C_0\\\label{eq:AppA27}
&=\sum_{k,k'}c_kc_{k'}\left(\frac{2^m+1}{2^{2m}-1}\right)^\xi C_0,
\end{align}
which finishes the proof for \cref{thm:2}. 

\section{More Numerical Results}\label{appsec:2}
\begin{figure}
    \centering
    \includegraphics[width=0.99\textwidth]{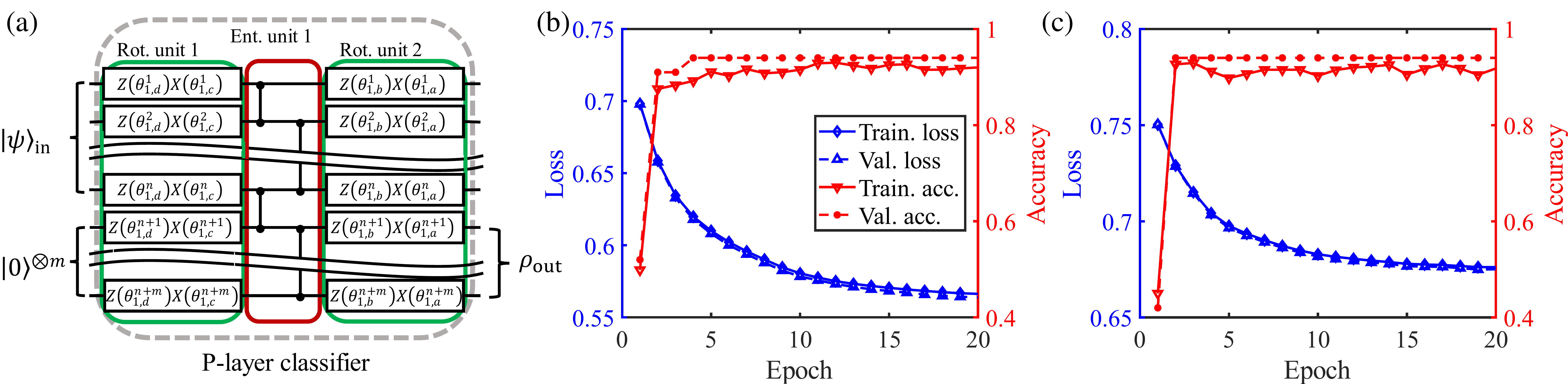}
    \caption{(a) The illustration for $P$-layer variational quantum circuits to construct encoders and classifiers in the numerical simulations. Each layer contains two single-qubit rotation units (the green boxes) and one entangling unit (the red box). In each rotation unit, we perform and Euler angle rotation $Z(\theta_{i,u}^k)X(\theta_{i,v}^k)$. $(u,v)=(d,c)$ or $(b,a)$ distinguishes the two rotation units. $i=1,2,...,P$ denotes the index of the layer. $k=1,2,...,m+n$ denotes the qubit index. (b) and (c). The loss and accuracy averaged over the training set and the validation set as the function of epoch, for the $12$-qubit quantum classifier with (b) the KL divergence and (c) the normalized square loss as the loss function. Each epoch consists of 10 iterations.}
    \label{fig:Apptrain}
\end{figure}

In this appendix, we provide the details for our numerical simulations. The structure of PVQCs we used in numerical simulations is shown in Fig.~\ref{fig:Apptrain}(a). In this $P$-layer PVQC classifier, we first prepare the input state as an $(m+n)$-qubit state $\ket{\psi}_{\text{in}}\otimes\ket{0}^{\otimes m}$, where $\ket{\psi}_{\text{in}}$ is the quantum state that encodes the data to be classified and $\ket{0}^{\otimes m}$ are the ancillary qubits for measurement outputs. Then we apply $P$ layers of unitary quantum operations with each layer containing two rotation units and one entangling unit. Each rotation unit performs an Euler rotation in the single-qubit Bloch sphere and each entangling unit entangles different qubits using CNOT gates between each pair of neighboring qubits. We can adjust the rotation angles and these angles are collectively regarded as variational parameters $\Theta$. The final output state can be written as:
\begin{align}\label{eq:VQCstate}
\ket{\psi(\Theta)}=\left(\prod_{i=1}^PU_i\right)\ket{\psi}_{\text{in}}\otimes\ket{0}^{\otimes m},
\end{align}
where $U_i=\left[\prod_{j=1}^{n+m}Z(\theta_{i,d}^j)X(\theta_{i,c}^j)\right]U_{\text{ent}}\left[\prod_{j=1}^{n+m}Z(\theta_{i,b}^j)X(\theta_{a,1}^j)\right]$ denotes the quantum operation for the $i$-th layer and $U_{\text{ent}}$ denotes the entangling unit. For adversarial attacks and encoders, we set $m=0$. For the quantum classifier, we employ $P=10$ for different system sizes. As the ground states of the cluster Ising model has two phases, we set $m=1$. After the classifier, we measure the output ancillary qubit $\rho_{\text{out}}$ and compute $\Pr(y=m)=\Tr\left(\rho_{\text{out}}\ket{m}\bra{m}\right),m=0,1$. We assign a label $y=0$ if $\Pr(y=0)\geq\Pr(y=1)$ and $y=1$ otherwise.

For the adversarial PVQC $U(\bm{\theta})$, we require that the initial parameters satisfy $U(\bm{\theta}_0)=I$. Therefore, we employ an alternative version of variational quantum circuits. To guarantee that $U(\bm{\theta}_0)=I$, we add a complex conjugate entangling unit at the end of each layer, and each $U_i$ in Eq.~\eqref{eq:VQCstate} becomes $U_i=\left[\prod_{j=1}^{n+m}Z(\theta_{i,d}^j)X(\theta_{i,c}^j)\right]U_{\text{ent}}\left[\prod_{j=1}^{n+m}Z(\theta_{i,b}^j)X(\theta_{a,1}^j)\right]U_{\text{ent}}^\dagger$. We then initialize the adversarial PQVC with all rotation angles being zero.

In the numerical simulations, we exploit the quantum-adapted KL divergence $L_{KL}(h(\ket{\psi}_{\text{in}};\Theta),\bm{p})$ and the normalized square loss $L_{NS}(h(\ket{\psi}_{\text{in}};\Theta),\ket{y})$:
\begin{align}
L_{KL}(h(\ket{\psi}_{\text{in}};\Theta),\textbf{p})&=-\sum_{i=1}^2p_k\log q_k\\
L_{NS}(h(\ket{\psi}_{\text{in}};\Theta),\ket{y})&=1-\abs{\bra{y}\rho_{\text{out}}\ket{y}}^2,
\end{align}
where $\textbf{q}=(q_1,q_2)$ denotes the diagonal elements of the output state $\rho_{\text{out}}$ and $\textbf{p}=(1,0),(0,1)$ for $y=0,1$. During the training procedure of the quantum classifier, we exploit the gradient-based Adam optimization algorithm \cite{Kingma2014Adam,Sashank2018Convergence} to minimize the empirical loss function $L_N(\Theta)=\frac{1}{N}\sum_{i=1}^NL(h(\ket{\psi};\Theta),y)$ over $N$ training samples. To calculate the gradients of the loss function, we employ the definition $\partial L_N(\theta)/\partial\theta=\lim_{\epsilon\to0}\frac{1}{2\epsilon}[L_N(\theta+\epsilon)-L_N(\theta-\epsilon)]$ and estimate the value by choosing a small $\epsilon=10^{-10}$. In Fig.~\ref{fig:Apptrain}(b) and (c), we display the averaged loss and accuracy  in the training procedure for the $12$-qubit classifier. The overfitting risk is low \cite{Srivastava2014Dropout} as the loss and accuracy are close for validation data samples and training data samples. The numerical simulations in this paper were implemented based on the Yao.jl extension \cite{Yao}, the Flux.jl \cite{Innes2018Flux} and the Zygote.jl \cite{Zygote} packages using the Julia programming language \cite{Bezanson2017Julia}.

\section{Analytical Derivations for the Adversarial Risk of Local Unitary Attacks}\label{appsec:3}

In this appendix, we give a detailed proof for \cref{thm:3} in the main text. We introduce the concepts of the concentration function and the Levy family, and some basic results regarding the adversarial machine learning. 

\begin{definition}
For a subset $\mathcal{H}'\subseteq\mathcal{H}$, a \textit{$\tau$-extension} for $\mathcal{H}'$ under the distance metric $D$ is defined as $\mathcal{H}'_\tau=\{x\in\mathcal{H}|D(x,\mathcal{H}')\leq\tau\}$. The \textit{concentration function} for a probability measure $\mu$ is defined as $\alpha(\tau)=1-\inf\{\mu(\mathcal{H}_\tau')|\mu(\mathcal{H})\geq\frac12)\}$. A $d$-dimensional space equipped with the distance metric $D$ and probability measure $\mu$ is called an $(l_1,l_2)$-Levy family if 
\begin{align}
\alpha(\tau)=l_1e^{-l_2\tau^2 d}.
\end{align}
\end{definition}

We next introduce the following lemma showing that $\bigotimes_{i=1}^n SU(2)$ equipped with the Haar measure on each $SU(2)$ and the normalized Hamming distance is a Levy family:

\begin{lemma}
(Example 2.6 in Ref. \cite{Giordano2007Some}) Given a probability measure space $(\mathcal{X},\mu)$, we consider the tensor product space $\mathcal{X}^{\otimes n}$ equipped with $\mu^{\otimes n}$ and the normalized Hamming distance $d_{\text{NH}}$. The ensemble $(\mathcal{X}^{\otimes n},\mu^{\otimes n},d_{\text{NH}})$ forms a $(2,\frac{n}{\dim(\mathcal{X})^n})$-Levy family with the concentration function satisfying:
\begin{align}
\alpha(\tau)=2e^{-\tau^2 n}.
\end{align}
\end{lemma}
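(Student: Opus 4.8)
The plan is to recognize the claim as an instance of the classical concentration of measure for product probability spaces under the (normalized) Hamming metric, and to prove the required tail bound by the martingale method. Unwinding the definition of the concentration function $\alpha$, the statement is equivalent to the following: for every measurable $A\subseteq\mathcal{X}^{\otimes n}$ with $\mu^{\otimes n}(A)\geq\tfrac12$, the $\tau$-extension $A_\tau=\{x\in\mathcal{X}^{\otimes n}: d_{\text{NH}}(x,A)\leq\tau\}$ satisfies $\mu^{\otimes n}(\mathcal{X}^{\otimes n}\setminus A_\tau)\leq 2e^{-\tau^2 n}$. Since in the definition of an $(l_1,l_2)$-Levy family the ``dimension'' of the product space is taken to be $\dim(\mathcal{X})^n$, the exponent $l_2\tau^2 d$ with $l_2=n/\dim(\mathcal{X})^n$ collapses to $\tau^2 n$, so this tail bound is exactly what must be established.

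First I would fix such an $A$ and analyze $f(x)=d_{\text{NH}}(x,A)=\tfrac1n\min_{a\in A}\sum_{i=1}^n\bm{1}[x_i\neq a_i]$. The elementary input is that $f$ has the bounded-differences property with constant $1/n$: changing a single coordinate of $x$ changes $f(x)$ by at most $1/n$, because a minimizer $a\in A$ for the old point differs from the new point in at most one additional coordinate. Then I would form the Doob martingale $Y_k=\mathbb{E}_{X\sim\mu^{\otimes n}}[f(X)\mid X_1,\dots,X_k]$; the product structure of $\mu^{\otimes n}$ together with the bounded-differences property makes each increment satisfy $|Y_k-Y_{k-1}|\leq 1/n$, so McDiarmid's inequality (equivalently Azuma--Hoeffding applied to $Y_k$) gives the one-sided Gaussian tail $\Pr[f(X)-\mathbb{E} f(X)\geq t]\leq e^{-2nt^2}$ and its mirror for the lower tail.

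It remains to pass from deviation around the mean to the extension estimate. Since $f$ vanishes on $A$ and $\mu^{\otimes n}(A)\geq\tfrac12$, the value $0$ is a median of $f$; inserting $t=\mathbb{E} f(X)$ into the lower-tail bound forces $\mathbb{E} f(X)\leq\sqrt{(\ln 2)/(2n)}$, so the mean is small. Applying the upper-tail bound with $t=\tau-\mathbb{E} f(X)$ then gives $\mu^{\otimes n}(\mathcal{X}^{\otimes n}\setminus A_\tau)=\Pr[f(X)>\tau]\leq e^{-2n(\tau-\mathbb{E} f(X))^2}$, and a short computation (completing a square in $\tau$) shows this is at most $2e^{-\tau^2 n}$; this is precisely the slack that the leading factor $l_1=2$ and the relaxed exponent $\tau^2 n$ (rather than the sharper $2\tau^2 n$) are there to provide. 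Combined with the reduction above, this yields $\alpha(\tau)\leq 2e^{-\tau^2 n}$, i.e., the $(2,n/\dim(\mathcal{X})^n)$-Levy family property.

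I expect the only real obstacle to be bookkeeping: tracking the numerical constants so that the mean-versus-median correction is fully absorbed into $l_1$, and being careful about open versus closed $\tau$-extensions and measurability of $f$. A clean alternative that sidesteps the martingale is Marton's transportation-cost argument---tensorize a Pinsker-type $W_1$/relative-entropy inequality over the $n$ coordinates and apply it to the laws of $X$ conditioned on $\{X\in A\}$ and on $\{X\in A_\tau^c\}$---which reproduces the same $e^{-\Theta(n\tau^2)}$ behavior; for a fully self-contained constant I would simply cite Giordano--Pestov (Example~2.6 of Ref.~\cite{Giordano2007Some}) or the Milman--Schechtman formulation.
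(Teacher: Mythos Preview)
Your argument is correct, but note that the paper does not actually prove this lemma: immediately after stating it, the authors write ``We refer to Ref.~\cite{Giordano2007Some,Talagrand1995Concentration} for the proof of the above lemma'' and move on. So there is no paper proof to compare against beyond the citation.

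What you have supplied is precisely one of the standard arguments those references contain. The bounded-differences observation for $f(x)=d_{\text{NH}}(x,A)$ with constant $1/n$, the Doob martingale plus Azuma/McDiarmid to get $\Pr[|f-\mathbb{E}f|\geq t]\leq 2e^{-2nt^2}$, and the mean-to-median correction via $\mu^{\otimes n}(A)\geq\tfrac12$ are all right. Your constant check also holds: writing $m=\mathbb{E}f$, the lower tail at $t=m$ gives $2nm^2\leq\ln 2$, and then $-2n(\tau-m)^2+n\tau^2=n(-\tau^2+4\tau m-2m^2)=-n(\tau-2m)^2+2nm^2\leq\ln 2$, so $e^{-2n(\tau-m)^2}\leq 2e^{-n\tau^2}$ for $\tau\geq m$; for $\tau<m$ the right-hand side exceeds $1$ and the bound is vacuous. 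The transportation alternative you mention (Marton--Talagrand) is the other standard route and is in fact closer in spirit to the source cited for the constants.

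The only genuine caveats are the ones you already flag: measurability of $x\mapsto d_{\text{NH}}(x,A)$ for general $(\mathcal{X},\mu)$ needs a word (it is automatic once $\mathcal{X}$ is Polish, as in the intended application $\mathcal{X}=SU(2)$), and the open/closed distinction in $A_\tau$ is harmless because the normalized Hamming distance takes values in $\{0,\tfrac1n,\dots,1\}$.
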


We refer to Ref. \cite{Giordano2007Some,Talagrand1995Concentration} for the proof of the above lemma. The above lemma implies that the tensor space $\bigotimes_{i=1}^n SU(2)$ with the Haar measure on each qubit $\mu_H^{\otimes n}$ and the normalized Hamming distance forms a $(2,\frac{n}{2^n})$-Levy family. We recap the following lemma regarding the lower bound of the measure of a $\tau$-extension for a subspace $\mathcal{H}'$.

\begin{lemma}
(Theorem 3.6 in Ref. \cite{Mahloujifar2019Curse}) For a subspace $\mathcal{H}'$ chosen from a $(l_1,l_2)$-Levy family $(\mathcal{H},D,\mu),\dim(\mathcal{H})=d$, the measure of a $\tau$-extension of $\mathcal{H}'$ is greater than $R$ if $\tau$ satisfies
\begin{align}
\tau^2\geq\frac{1}{l_2 d}\ln\left[\frac{l_1^2}{\mu(\mathcal{H}')(1-R)}\right].
\end{align}
\end{lemma}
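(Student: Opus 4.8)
The plan is to prove the statement directly: assuming $\tau^{2}\ge\frac{1}{l_{2}d}\ln[\frac{l_{1}^{2}}{\mu(\mathcal{H}')(1-R)}]$, show that the $\tau$-extension $\mathcal{H}'_{\tau}=\{x\in\mathcal{H}:D(x,\mathcal{H}')\le\tau\}$ has $\mu(\mathcal{H}'_{\tau})>R$. I would argue by contradiction: suppose the complement $B:=\mathcal{H}\setminus\mathcal{H}'_{\tau}$ has $\mu(B)\ge 1-R$, and derive a bound on $\tau$ that violates the hypothesis. The one geometric fact used is that $\mathcal{H}'$ and $B$ are \emph{$\tau$-separated} — every point of $B$ lies at distance greater than $\tau$ from $\mathcal{H}'$, by the definition of $B$ — so for any radii $r,s\ge 0$ with $r+s\le\tau$ the inflations $\mathcal{H}'_{r}$ and $B_{s}$ are disjoint, since a common point would force $D(\mathcal{H}',B)\le r+s\le\tau$.

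The crux is to upgrade the Levy-family hypothesis, which \emph{a priori} only constrains sets of measure at least $1/2$, to the possibly tiny sets $\mathcal{H}'$ and $B$. For a measurable $S$ with $\mu(S)=p_{0}>0$, let $r_{\star}(S)$ be the least radius with $\mu(S_{r_{\star}(S)})\ge 1/2$; I would first show $\alpha(r_{\star}(S))\ge p_{0}$, hence $r_{\star}(S)\le\sqrt{\frac{1}{l_{2}d}\ln(l_{1}/p_{0})}$. Indeed, for every $r<r_{\star}(S)$ the complement $\mathcal{H}\setminus S_{r}$ has measure exceeding $1/2$, so by the concentration function its $s$-inflation has measure at least $1-\alpha(s)$ and, exactly as above, is disjoint from $S_{r-s}\supseteq S$ whenever $s\le r$; this forces $p_{0}\le\alpha(s)$, and letting $s\uparrow r\uparrow r_{\star}(S)$ with $\alpha$ continuous gives the bound. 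Applying this to $\mathcal{H}'$ (measure $p:=\mu(\mathcal{H}')$) and to $B$ (measure at least $1-R$) yields radii $m_{0}^{2}\le\frac{1}{l_{2}d}\ln(l_{1}/p)$ and $n_{0}^{2}\le\frac{1}{l_{2}d}\ln(l_{1}/(1-R))$ with $\mu(\mathcal{H}'_{m_{0}})\ge 1/2$ and $\mu(B_{n_{0}})\ge 1/2$; a point $x$ common to $\mathcal{H}'_{m_{0}}$ and $B_{n_{0}}$ (here one uses the usual care with null sets and radii at measure exactly $1/2$) gives $\tau<D(\mathcal{H}',B)\le D(x,\mathcal{H}')+D(x,B)\le m_{0}+n_{0}$. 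Squaring, $\tau^{2}<(m_{0}+n_{0})^{2}\le 2(m_{0}^{2}+n_{0}^{2})=\frac{2}{l_{2}d}\ln\frac{l_{1}^{2}}{p(1-R)}$, which already proves the lemma with the bound $\tau^{2}\ge\frac{2}{l_{2}d}\ln\frac{l_{1}^{2}}{\mu(\mathcal{H}')(1-R)}$.

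The step I expect to be the main obstacle is removing that stray factor of $2$ to reach the stated inequality: the additive split $D(\mathcal{H}',B)\le m_{0}+n_{0}$ discards the cross term $2m_{0}n_{0}$. To recover the sharp constant one should not route through the generic Levy estimate but work directly with the $1$-Lipschitz function $f(x)=D(x,\mathcal{H}')$ about its median $m$, using the two-sided deviations $\mu(\{f\le m-t\})\le\alpha(t)$ and $\mu(\{f>m+t\})\le\alpha(t)$: the first forces $m\le\sqrt{\frac{1}{l_{2}d}\ln(l_{1}/p)}$ (letting $t\uparrow m$) and the second gives $\mu(B)=\mu(\{f>\tau\})\le\alpha(\tau-m)$, and then the \emph{sharp} sub-Gaussian form of the concentration function valid for the concrete spaces at hand — product spaces under the normalized Hamming metric, as in the preceding lemma — is what lets one trade $m^{2}+(\tau-m)^{2}$ for $\tau^{2}$ and land on $l_{1}^{2}/(\mu(\mathcal{H}')(1-R))$ exactly. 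This sharpening, rather than the soft concentration argument, is where the real work of Ref.~\cite{Mahloujifar2019Curse} lies.
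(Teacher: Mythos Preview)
Your core argument is the same as the paper's: inflate the small set until it reaches measure $\tfrac12$, then invoke the concentration function once more. You phrase it symmetrically (inflate both $\mathcal{H}'$ and $B=\mathcal{H}\setminus\mathcal{H}'_\tau$ to half-measure and force an intersection), whereas the paper does it sequentially---split $\tau=\tau_1+\tau_2$, use $\tau_1$ to bring $\mathcal{H}'$ to measure $>\tfrac12$ via exactly your $r_\star$ argument, then apply the Levy bound with $\tau_2$---but the two routes are equivalent and both deliver the estimate $\tau\ge\sqrt{a}+\sqrt{b}$ with $a=\tfrac{1}{l_2d}\ln\bigl(l_1/\mu(\mathcal{H}')\bigr)$ and $b=\tfrac{1}{l_2d}\ln\bigl(l_1/(1-R)\bigr)$.

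You are right that this leaves a factor-of-$2$ slack, and the paper's recap has exactly the same slack: its concluding line ``$\tau^2\ge\tau_1^2+\tau_2^2=a+b$'' is trivially true for $\tau=\tau_1+\tau_2$ but points the wrong way logically---the lemma as stated would need $\tau^2\ge a+b$ to \emph{force} $\tau\ge\tau_1+\tau_2$, which fails by precisely the cross term you flag. Your Lipschitz-median fix is not a separate mechanism either: the median of $f(x)=D(x,\mathcal{H}')$ is the paper's $\tau_1$, and $\mu(\{f>\tau\})\le\alpha(\tau-m)$ just reproduces the condition $(\tau-m)^2>b$, landing back at $\tau>\sqrt{a}+\sqrt{b}$; the ``sharp sub-Gaussian'' trade you gesture at does not close the gap. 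So your proposal matches the paper's argument, correctly diagnoses a loose constant that the paper glosses over, and for the downstream use (only the $O(1/n)$ scaling of $\tau^2$ matters) this constant is immaterial.
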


\begin{proof}
We briefly recap the proof given in Ref. \cite{Mahloujifar2019Curse,Liu2019Vulnerability}. We decompose $\tau$ into two parts $\tau=\tau_1+\tau_2$. We choose $\tau_1$ with $\mu(\mathcal{H}')>l_1e^{-l_2\tau_1^2 d}$. There are two cases concerning whether $\mu(\mathcal{H}')\leq\frac12$:

(1) For the case when $\mu(\mathcal{H}')\leq\frac12$, assuming $\mu(\mathcal{H}'_{\tau_1})\leq\frac12$, we have $\mu(\mathcal{H}\backslash\mathcal{H}'_{\tau_1})\geq\frac12$. For simplicity, we denote $\overline{\mathcal{H}}'_{\tau_1}=\mathcal{H}\backslash\mathcal{H}'_{\tau_1}$. By the definition of Levy family, we deduce that $\alpha(\tau_1)\geq1-\mu(\overline{\mathcal{H}}_{\tau_1}')=\mu(\mathcal{H}')>\alpha(\tau_1)$, which leads to a contradiction.  

(2) For the case when $\mu(\mathcal{H}')>\frac12$, it is straightforward to see $\mu(\mathcal{H}'_{\tau_1})>\frac12$.

Therefore, by choosing such $\tau_1$ we can guarantee that $\mu(\mathcal{H}'_{\tau_1})>\frac12$. Next, we consider the $\tau_2$-extension of $\mathcal{H}'_{\tau_1}$ with $\tau_2^2\geq\frac{1}{l_2d}\ln\left[\frac{l_1}{(1-R)}\right]$. Applying the definition of the Levy family we can prove the lemma as $\mu(\mathcal{H}'_{\tau_1+\tau_2})>1-\alpha(\tau_2)\geq R$ and $\tau^2\geq\tau_1^2+\tau_2^2=\frac{1}{l_2d}\ln\left[\frac{l_1^2}{\mu(\mathcal{H}')(1-R)}\right]$.
\end{proof}

Now we start to prove \cref{thm:3} in the main text. We notice that $(\bigotimes_{i=1}^n SU(2),d_{\text{NH}},\mu_H^{\otimes n})$ forms a $(2,\frac{n}{2^n})$-Levy family and $\dim((\bigotimes_{i=1}^n SU(2))=2^n$. Hence, for any subspace $\mathcal{H}'\subseteq \bigotimes_{i=1}^n SU(2)$, any $\tau$-extension of $\mathcal{H}'$ has measure at least $R^*$ if 
\begin{align}\label{eq:C4}
\tau^2\geq\frac{1}{n}\ln[\frac{4}{\mu_H^{\otimes n}(\mathcal{H}')(1-R^*)}].
\end{align}

Given $k=2,3,...,K$, any data sample in the intersection of the $\tau$-extensions $\{h^{-1}(s_i)_\tau\}_{i=1}^k$ can be transformed into a data sample in any $h^{-1}(s_i)$, when a perturbation $\rho\to\rho'$ of amplitude $\tau$ occurs. The adversarial attack can thus change the labels for all the data samples in this intersection set. By the De Morgan's law, the measure of this intersection set satisfies
\begin{align}
\mu_H^{\otimes n}\left(\cap_{i=1}^kh^{-1}(s_i)_\tau\right)\geq1-\sum_{i=1}^k\mu_H^{\otimes n}\left(\mathcal{H}\backslash h^{-1}(s_i)_\tau\right)=\sum_{i=1}^k\mu_H^{\otimes n}\left(h^{-1}(s_i)_\tau\right)-(k-1).
\end{align}
After setting $R^*=\frac{k-1+R}{k}$ and $\mathcal{H}'=h^{-1}(s_k)$ in Eq.~\eqref{eq:C4}, we deduce that adversarial risk is bounded below by $\mu_H^{\otimes n}\left(\cap_{i=1}^kh^{-1}(s_i)_\tau\right)\geq R$. By choosing the minimal value of all $k=2,3,...,K$, we finish the proof for \cref{thm:3} in the main text.

\end{document}